\setlist[enumerate]{leftmargin=.5in}
\setlist[itemize]{leftmargin=.5in}
\crefname{hypothesis}{Hypothesis}{Hypotheses}
\title{Modularity maximisation for  graphons\thanks{Submitted to the editors \today.
\funding{F.K. and N.S.J. thank the EPSRC (Centre for Mathematics of Precision Healthcare; EP/N014529/1). 
M.T.S. acknowledges funding from the Ministry of Culture and Science (MKW) of the German State of North Rhine-Westphalia ("NRW Rückkehrprogramm").}} 
}
\author{Florian Klimm\thanks{Department of Mathematics, Imperial College London \& MRC Mitochondrial Biology Unit, University of Cambridge
  (\email{f.klimm@imperial.ac.uk}).}
\and Nick S. Jones\thanks{Department of Mathematics, Imperial College London 
  (\email{nick.jones@imperial.ac.uk}).}
\and Michael T. Schaub\thanks{Department of Computer Science, RWTH Aachen University
  (\email{schaub@cs.rwth-aachen.de}).}}
\begin{document}
\maketitle

\begin{abstract} 
Networks are a widely-used tool to investigate the large-scale connectivity structure in complex systems and 
\emph{graphons} have been proposed as an infinite size limit of dense networks. The detection of communities or other meso-scale structures is a prominent topic in network science as it allows the identification of functional building blocks in complex systems. When such building blocks may be present in graphons is an open question. In this paper, we define a graphon-modularity and demonstrate that it can be maximised to detect communities in graphons. We then investigate specific synthetic graphons and show that they may show a wide range of different community structures. We also reformulate the graphon-modularity maximisation as a continuous optimisation problem and so prove the optimal community structure or lack thereof for some graphons, something that is usually not possible for networks. Furthermore, we demonstrate that estimating a graphon from network data as an intermediate step can improve the detection of communities, in comparison with exclusively maximising the modularity of the network. While the choice of graphon-estimator may strongly influence the accord between the community structure of a network and its estimated graphon, we find that there is a substantial overlap if an appropriate estimator is used. Our study demonstrates that community detection for graphons is possible and may serve as a privacy-preserving way to cluster network data.
 \end{abstract}

 % REQUIRED
\begin{keywords}
networks, community detection, modularity maximisation, graphs, graphons, privacy
\end{keywords}

% REQUIRED
\begin{AMS}
05C63, 05C90, 62H30, 90C35, 91C20, 94C15
 \end{AMS}

\section{Introduction}

Networks have become nearly ubiquitous abstractions for complex systems arising in biological, technical, and social contexts, and many other applications~\cite{newman2010networks}.
Mathematically, such networks are represented as graphs in which nodes represent entities and edges relations between such entities.
Accordingly, graph-based tools have been employed to study and reveal properties of such network systems.
Of particular interest has been the detection of \emph{community structure}, i.e., a grouping of nodes that are more similar to each other according to some criterion than to the rest of the network.
There is is no standard definition of community structure~\cite{fortunato2010community,fortunato2016community,schaub2017many} and different notions of community structure exist. We adopt here the common viewpoint of defining communities as densely connected groups of nodes, when compared to the remainder of the network.

Detecting such communities has led to insights about the function of proteins~\cite{lewis2010function}, social networks~\cite{traud2012social}, neuroscience~\cite{bassett2017network}, and many other fields~\cite{porter2009communities}. 
A wide range of community-detection algorithms exists, such as maximum-likelihood estimation of generative network models, spectral methods, and matrix-decomposition approaches~\cite{fortunato2010community,fortunato2016community}.
Despite some known limitations, such as an inherent resolution limit~\cite{fortunato2007resolution}, one of the most popular community detection algorithms is modularity maximisation, which seeks to optimize the modularity score proposed in a seminal paper by Newman and Girvan~\cite{newman2004finding}.

A current challenge in the analysis of real-world systems is that increased measurements and data availability have been creating a need for algorithms and analysis tools that scale to very large networks.
In this context graphons have emerged as one promising non-parametric generative model for the study of large networks (a more detailed introduction to graphons will be given in~\Cref{sec:preliminaries}; briefly, graphons are functions on the unit square originally proposed as continuous limiting objects for dense graph sequences~\cite{lovasz2006limits}).
Using the Aldous--Hoover representation theorem~\cite{jacobs2014unified,aldous1981representations,hoover1979relations}, it can be shown that graphons encapsulate a large number of popular existing generative graph models, such as the \emph{Erd\H os--R\'enyi graph}, the \emph{stochastic block model}~\cite{abbe2017community} and its variants, \emph{random dot product graphs}~\cite{athreya2017statistical}, and many further latent variable graph models~\cite{orbanz2014bayesian}.
Graphons are thus very flexible probabilistic models that can represent a wide range of network structures.

The price for this flexibility, however, is that a graphon-based network model may still be quite complex and not easy to interpret for a practitioner.
Hence, when inferring a graphon from empirical data, we might end up trading one large complex network for another complex object, which has arguably impeded the adoption of general graphon models by applied scientists (see~\cref{fig:privacyGraphon}a for an illustration).
Indeed, one reason for the interest in community detection is that communities enable a simplified description of a large network, by decomposing the network into modular ``building blocks''.
To address this issue in the context of graphons, in this work we develop community detection using a form of modularity optimization for graphons.
Our work thereby serves as a first step towards an effective, more interpretable summarization of an inferred graphon describing a complex network.

\subsection{Motivation: community detection for graphons}

\paragraph{Approximating and simplifying graphons via communities}
As outlined above, one main motivation for developing a form of community detection for graphons arises from the need to simplify an empirically obtained graphon further. As graphons encapsulate many popular generative graph models, obtaining a community structure for graphons allows to estimate community structure for those without the need to construct networks from them. While we here concentrate on providing an ``assortative block simplification'' in terms of community structure via modularity maximisation, other approaches are conceivable as well. For instance, we may use low-rank approximations or corresponding spectral embeddings to obtain a simplified description of a graphon.
In fact, we may be interested in other types of potential simplifications and analyses of graphons, including further structural decompositions such as core-periphery approximations~\cite{rombach2014core} or centrality measures~\cite{avella2018centrality}.
See also~\Cref{ssec:related_work} for a brief overview of the emerging area of graphon analysis for applications. 
 
\paragraph{Community detection via continuous optimization}
As graphons are objects defined in a continuous domain (functions on the unit square), some techniques from continuous mathematics become applicable for their analysis.
This may provide a potential reservoir of new algorithms and analysis tools for networks.
While not the main focus of this paper, we will see a simple example of this kind in \Cref{subsubsec:optimisation}, where an analytical solution to the modularity optimization problem on certain graphons is derived.
More generally, a fruitful endeavour could be to characterise how classical network algorithms might be seen as discrete approximations of certain problems defined on graphons, which could lead to a deeper understanding of these problems. 

\paragraph{Privacy-preserving computation}\label{sec:privacy}
The protection of sensitive data against unauthorised access is a crucial part of data warehousing and data analysis. 
Large-scale network data can be such sensitive information.
For instance, network data may comprise health records such as brain connectomics~\cite{hagmann2008mapping}, individuals' social contact data such as their Facebook network~\cite{maier2017cover}, or commercially confidential information. 
Facebook friendships, for example, may be used to expose an individual's sexual orientation~\cite{jernigan2009gaydar}.
Graphons are one way to represent such data as an approximation that partially preserves large- and meso-scale features of the data, while providing some anonymity for individuals. 
Importantly, it has been demonstrated that we can estimate graphons from network data while preserving the privacy of individual nodes~\cite{borgs2015private,borgs2018revealing}

By sharing such a privacy preserving graphon, a data collecting entity can thus enable further analysis of such network data.
This may lead to valuable insights into the system, while simultaneously preserving the privacy of the involved individuals.
\Cref{fig:privacyGraphon}b shows a schematic for such an approach for the case of community detection, in which an external entity is given access to a graphon created from a system graph $G$ in a privacy-preserving way. 
The external entity can now detect communities and analyse the graphon to gain insights about the system, without compromising the privacy of the individual nodes.
This procedure may also be of interest if the particular network analysis to be performed cannot be performed by the data collecting entity, e.g., because of a lack of sufficient computational resources.

  \begin{figure}[t]{
      \centering 
  \includegraphics[width=0.99\textwidth]{./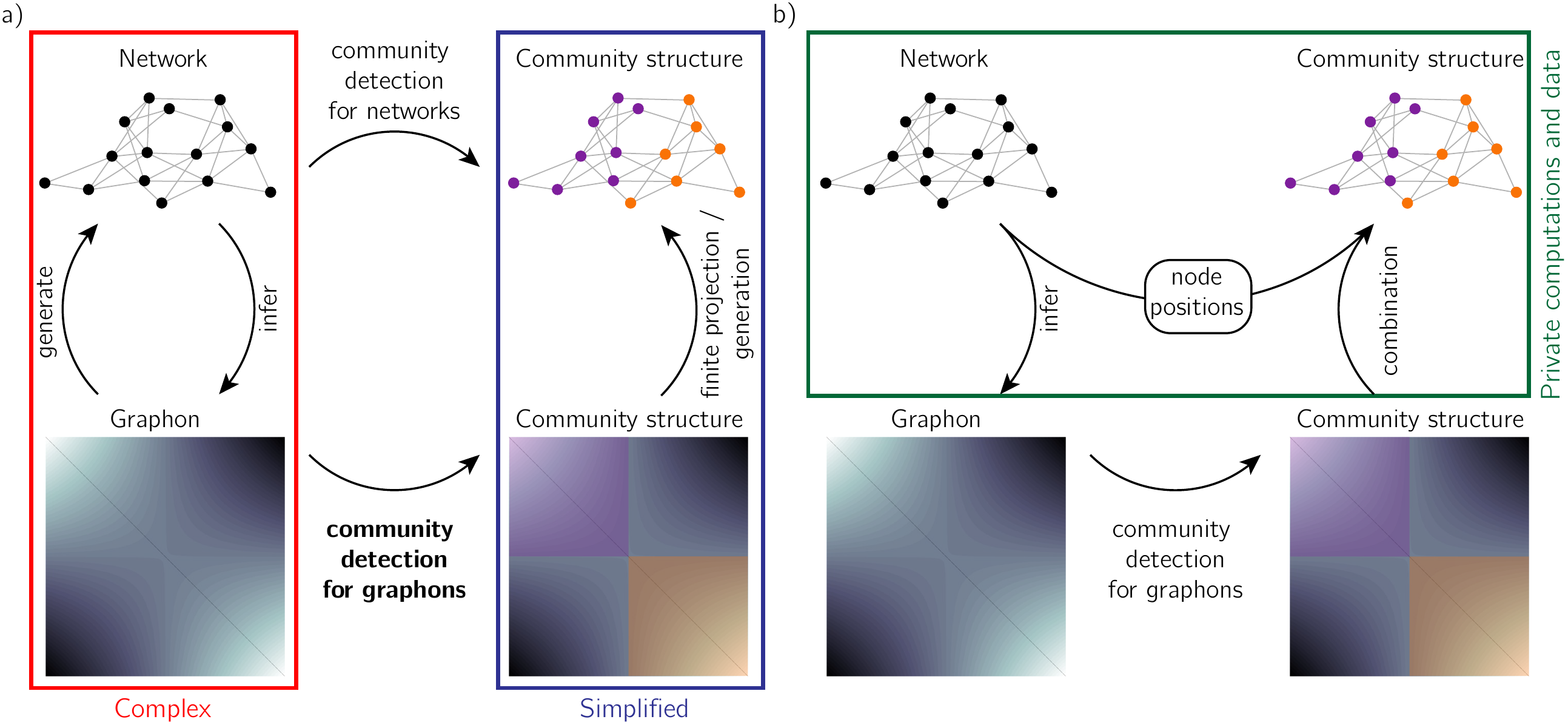}}
  \caption{\textbf{Schematic: graphon modularity enables the detection of community structure, which are---similar to networks---a simplified representation of the mesoscale connectivity structure. Furthermore, graphon-based computations enable privacy preserving network analysis.} (a) Classic community detection infers a community structure in a network. Networks can be generated from graphons. The inverse process is the inference of graphons from network data. We establish a community detection for graphons which allows the inference of communities in graphons.  (b) For the privacy-preserving network analysis, the data-hosting entity infers a graphon from a network and keeps the node positions $x_i$ private, while sharing the graphon with external entities (or publicly). The external entity can then compute the community structure of the graphon and returns it to the data-hosting entity, which combines it with the node positions $x_i$ to obtain the community membership of each node in the network. During the whole process, the privacy of individual nodes is preserved.
  }
  \label{fig:privacyGraphon}
\end{figure}

\subsection{Contributions}
In this paper, we introduce the problem of modularity optimization for graphons.
We define a modularity function for graphons and introduce corresponding algorithms to detect community structure in synthetic and empirically estimated graphons.
Specifically, we adapt the popular Louvain algorithm for modularity maximization in networks to the context of graphons.
Additionally, we discuss a simple continuous optimization variant for modularity we call \emph{sliced modularity optimization}.
For selected graphons, this enables us to derive analytical expressions of the optimal community structure. 
Our characterization of community structure in terms of the modularity of a graphon further provides a necessary and sufficient criterion for graphs to contain no modular structure.
Namely, for graphons that have a product form (the graphon operator has rank 1), any partition has modularity zero, i.e., there is no evidence for the presence of communities in terms of the modularity function.

As there already exist a number of algorithms for inferring a graphon from network data, we focus mainly on a scenario where the graphon of interest is already given when developing our modularity optimization approach. In~\Cref{subseq:empiricalExamples}, however, we also study graphons inferred from empirically observed and synthetic network data.
We report on how the graphon inference steps can impact the results of modularity optimization using numerical simulations.
Using stochastic blockmodels, we also illustrate that combining graphon estimation with graphon community detection may improve the performance of modularity-based community detection in comparison with standard, graph-based modularity maximisation.
Intuitively, by first fitting a graphon to the network we smooth out random fluctuations in the data, which can impede modularity maximization. At the same time, the full graphon description can be kept for a more refined interrogation of the network structure, e.g., in terms of centrality measures~\cite{avella2018centrality} or for sampling surrogate network data.

While in the setting of stochastic blockmodels, there is a planted ground truth partition to compare against, for real networks this is, of course, not the case.
We thus investigate numerically for several real networks the extent to which standard modularity optimization (i.e., directly applying modularity maximization on an observed graph) and graphon-based modularity maximization (i.e., first infer a graphon, then find the partition maximising graphon modularity) leads to the same results.
For empirical data, we find that the accord between the community structure detected in a network and its estimated graphon depends on the chosen estimation algorithm.  
Assuming that partitions that optimize modularity in the graph are those we want to detect in the graphon, this is an important aspect if we want to use the graphon based scheme for privacy preserving computation.

\newpage

\subsection{Related work}\label{ssec:related_work}
\paragraph{Graphon-based network analysis}
Our work relates to the development of graphon analysis tools and metrics, which is an active field of research.
Indeed, centrality measures~\cite{avella2018centrality}, controllability~\cite{gao2017control}, random-walk Laplacians~\cite{petit2019random}, epidemic spreading~\cite{vizuette2020sis}, and subgraph counts~\cite{coulson2016poisson} have all been recently derived for graphons.

Graphons have also been proposed as a general tool for encapsulating sensitive information in a node-private way~\cite{borgs2015private,borgs2018revealing}.
In light of our previous discussion, the graphon-modularity we present here, may thus also be interpreted as a privacy-preserving method for community detection. 

\paragraph{Graphon estimation}
The estimation of graphons from network data has been studied in a number of publications, e.g.,~\cite{wolfe2013nonparametric,olhede2014network,chatterjee2015matrix}. 
Approaches include stochastic blockmodel approximations~\cite{airoldi2013stochastic}, neighbourhood-smoothing algorithms~\cite{zhang2017estimating}, or sorting-and-smoothing algorithms~\cite{yang2014nonparametric}. 
Estimating a general graphon exactly is only possible under certain identifiability conditions~\cite{chan2014consistent}.
In practice, however, many approaches estimate a graphon as a \emph{mixture} of stochastic blockmodels, each consisting of many blocks of equal size (also called \emph{network histograms}~\cite{olhede2014network}).

\paragraph{Community detection in networks}
Many different approaches have been developed to detect communities in networks~\cite{fortunato2010community,fortunato2016community} and modularity-maximisation is one of the most widely-used paradigms. 
The {\sc Louvain}-algorithm is a fast heuristic to solve the modularity-maximisation problem~\cite{blondel2008fast}. 
An alternative to community detection is to identify an embedding of discrete nodes in a continuous latent space (e.g.,~\cite{grover2016node2vec}). 
These latent embeddings may be interpreted in terms of generalized community structure~\cite{newman2015generalized,hoff2002latent}, or indeed graphon estimation~\cite{orbanz2014bayesian}.

\subsection{Outline}\label{ssec:outline}
The rest of the paper is structured as follows. 
After briefly reviewing some preliminaries in \cref{sec:preliminaries}, we formally define graphon-modularity in \cref{sec:results}. 
We then present approaches to optimise the graphon-modularity in \cref{subsubsec:optimisation} and we discuss different synthetic graphons in \cref{subsec:syntheticGraphons}.
In~\cref{subseq:sampling} we consider the issue of modularity maximisation on graphon inferred from finite graphs and graphs sampled from latent graphons.
We finally perform numerical experiments on the performance of our methods for empirically networks in \cref{subseq:empiricalExamples} and conclude with a discussion in \Cref{sec:discussion}.

\section{Preliminaries}\label{sec:preliminaries}

\subsection{Graphs}
We represent networks mathematically as finite, unweighted, and undirected graphs.
A graph is an ordered pair $G=(V,E)$ composed of a set $V$ of nodes (vertices), and a set of edges (links) $E$, where each edge $e\in E$ corresponds to an unordered tuple of two nodes~\cite{newman2010networks,agnarsson2006graph}. 
We denote the number of nodes in a graph as  $N=|V|$ and the number of edges as $M=|E|$. 
Without loss of generality we will assume that the nodes of the graphs have been labeled by the nonzero integers such that $V = \{1,\ldots, N\}$.
Using this labeling we can define the \emph{adjacency matrix} $\mathbf{A}$ of the (labeled) graph as the $N \times N$ matrix with elements $A_{ij}=1$, if $(i,j) \in E$ and $A_{ij}=0$, otherwise. 
Based on this algebraic representation we can compute the degree $k_i$ of each node $i$ as $k_i=\sum_{j=1}^N A_{ij}$, i.e., $k_i$ is the number of edges incident to node $i$.

\subsection{Graphons}
Graphons originally emerged in the study of limits of large scale networks~\cite{Lovasz2006,borgs2008convergent,borgs2017graphons,lovasz2012large,glasscock2015graphon}, and have been defined as limits of (dense) graphs for which the number of nodes $N\rightarrow \infty$. 
A \emph{graphon} is a measurable function $W :[0,1]^{2} \rightarrow[0,1]$, that is symmetric with respect to its arguments such that $W(x, y)=W(y, x)$. 
An intuitive way to think of a graphon is in terms of the limiting object of a heatmap image (``pixel picture'')~\cite{glasscock2015graphon} of a graph's adjacency matrix as follows.
We assume that as the graph size $N\rightarrow \infty$, the heatmap image (the pixel picture) of the adjacency matrix is always spatially scaled to maintain the dimension of the unit square.
In the limit there are thus $N\rightarrow \infty$ nodes associated with the unit interval $[0,1]$ and the values $x$ and $y$ in a graphon may thus be interpreted as the indices of the vertices in an infinite graph. 

While this heuristic explanation provides some intuition, it needs refinement. 
Observe that for any graph we can permute the node labels and thereby change its representation in terms of the adjacency matrix, while leaving the graph structure unchanged, 
For a graphon to be a valid limiting objects of graphs rather than of adjacency matrices (i.e., labeled graphs), a graphon can only be defined as a limiting object up to measure preserving bijections of its arguments, i.e., $W(x,y) = W(\pi(x),\pi(y))$, where $\pi: [0,1] \rightarrow [0,1]$ is a measure preserving map.
A more precise characterization of the equivalence classes of graphons is provided in~\cite{lovasz2012large,glasscock2015graphon}.

Graphons may also be interpreted as nonparametric random graph models, as introduced in \cite{Lovasz2006} under the name $W$-random graphs. 
We can sample a random graph of size $N$ within this model as follows.
First, each node $i\in \{1,\ldots,N\}$ is assigned a latent position $u_i \in [0,1]$ (typically drawn uniformly at random).
Second, any pair of nodes $i,j$ is then connected with an edge with probability $\mathbb{P}(A_{ij} = 1) = W(u_i,u_j)$.

Similar to graphs we may associate a degree function $k(x)$ to every node $x$ in a graphon via the following Lebesgue integral:
\begin{align}
	k(x) = \int_0^1 W(x,y)\ dy\,.
\end{align}
Likewise we define the edge density $\mu$ of a graphon as:
\begin{align}
    \mu = \int_{{[0,1]}^2} W(x,y)\ dxdy =  \int_0^1 k(x)\ dx\,.	
\end{align}

\subsection{Community-detection in networks}
A community is a set of nodes, such that nodes within the same community are more densely connected to each than to nodes in other communities~\cite{fortunato2010community,fortunato2016community,porter2009communities}.
We restrict our discussion to \emph{non-overlapping} communities, such that each node belongs to exactly one community. 
For convenience, we describe the vertex to group assignment by the function $g_V: V \to \{1,2,\dots,c\}$, which maps each node to one of the $c$ communities. 

Many different heuristic algorithms to find such a function have been developed (see~\cite{schaub2017many,yang2016comparative,rosvall2019different} for reviews). 
Among the most widely-used heuristics is the so-called \emph{modularity maximisation}. 
For this, one defines a modularity function
\begin{align}\label{eq:modularity}
    Q(g_V) = \frac{1}{2M} \sum _{i,j=1}^N \left( A_{ij} - P_{ij}\right)\delta[ g_V(i),g_V(j) ] = \frac{1}{2M} \sum _{i,j=1}^N B_{ij} \, \delta[ g_V(i),g_V(j) ]\,,
\end{align}
which is a quality index for a group assignment function $g_V$ of a network with adjacency matrix $\mathbf{A}$~\cite{newman2006modularity}.
Here, the matrix $\mathbf{B}=[B_{ij}]$ is the \emph{modularity matrix} and its entries $B_{ij} = A_{ij}-P_{ij}$ are equal to the entries of the adjacency matrix shifted by a chosen \emph{null model} term $P_{ij}$.
This null model term is typically chosen to be the expected connection strength between nodes $i$ and $j$ under a chosen random graph model.
The null model thus serves as a baseline to which the actual connections of the adjacency matrix are compared.
Since the Kronecker-delta $\delta[ \cdot, \cdot ]$ in \Cref{eq:modularity} is $1$ if its arguments are equal and $0$ otherwise, the above sum only takes into accounts elements of the modularity matrix for which nodes $i$ and $j$ belong to the same community $g_V(i)=g_V(j)$. 
Accordingly, the modularity of a particular group assignment $g_V$ is equal to the sum (rescaled by $2M$) of the intra-community edges minus the expected weight of intra-community edges.

There are numerous choices for the null model term $P_{ij}$.
There exist null models for spatially embedded networks~\cite{expert2011uncovering,sarzynska2015null}, null models for networks constructed from correlation models~\cite{macmahon2013community,bazzi2016community}, and for many other situations~\cite{fortunato2016community}.
Here we focus on the typically considered Newman--Girvan null model $P_{ij}=k_i k_j /(2M)$~\cite{newman2004finding}, also known as the Chung--Lu model~\cite{Chung2002}, which preserves the expected degree distribution of the graph.

With this choice for $P_{ij}$ the modularity $Q$ can be written as:
\begin{align}\label{eqn:modularityGen}
    Q(g_V) = \frac{1}{2M} \sum _{i,j=1}^N \left( A_{ij} - \frac{k_i k_j}{2M} \right)\delta[g_V(i),g_V(j)] \,.
\end{align}
Clearly there are a number of equivalent group assignments, as the group labels can be permuted without changing the induced partition of the nodes.
It is thus the partition of the nodes induced by the group labels that is important for the modularity score, rather than the labels per se.

The task of community detection can now be formalized as the following~\emph{modularity-maximisation problem}~\cite{bazzi2016community}:
Find a partition of the nodes (respectively a group assignment) that maximises the modularity function $Q$
\begin{align}\label{eq:mod_optimization}
    \max_{g_V} Q(g_V) = \max_{g_V} \left\{\sum _{i,j=1}^N B_{ij} \, \delta[ g_V(i),g_V(j) ] \right\}\,, 
\end{align}
where $g_V$ is a group assignment function, mapping each node to one community.

Note that in the optimization problem \Cref{eq:mod_optimization} the number $c$ of communities is not fixed. 
The modularity maximization problem may thus be viewed as searching over the set of all possible partitions of the nodes.
Since this set becomes extremely large even for moderately sized graphs, \Cref{eq:mod_optimization} is computationally difficult to optimize.
In fact, it has been shown that modularity maximisation is an NP-hard problem~\cite{brandes2007modularity}.
In practice, the modularity optimization problem is thus solved approximately using (greedy) heuristic procedures, such as the \emph{Louvain algorithm} or the \emph{Leiden algorithm}~\cite{blondel2008fast,genlouvain,traag2019louvain}, which have been shown to yield good empirical performance.

\section{A modularity function for graphons}
\label{sec:results}

In this section, we define the modularity function for graphons, which we will later employ for community detection in graphons.
Analogously to the community-detection problem for graphs, the community-detection problem for graphons can be expressed as the identification of a group assignment function
\begin{align}
	g: [0,1] \to \{1,2,\dots,c\}\,,
	\label{eqn:communityFunction}
\end{align}
which assigns each node position $x \in [0,1]$ to one of $c$ communities.
To find such a community-assignment function, we define a modularity function for graphons.

\begin{definition}[Graphon modularity]\label{def:graphonModularity}
For a graphon $W(x,y)$, a graphon null model $P(x,y)$ and a group assignment function $g$, we define the \emph{graphon-modularity} as
\begin{align}
    Q(g) = \frac{1}{\mu} \int_0^1 \int_0^1  \underbrace{\big[ W(x,y) - P(x,y)\big] }_{\mathrm{modularity\ surface}\ B(x,y)}\delta(g(x)-g(y))\ dx dy\,,
\label{eqn:modularityGraphon}
\end{align}
where $\delta(\cdot)$ denotes Dirac's delta function, and we have defined the \emph{modularity surface} $B(x,y) = W(x,y) - P(x,y)$ as analog of the modularity matrix for graphons.
\end{definition}

Analogous to graph case, the graphon modularity function may be interpreted as a measure of the quality of a node partition induced by the group assignment function $g$.
Likewise, the modularity surface $B(x,y)$ indicates how well a graphon $W(x,y)$ at location $(x,y)$ is connected compared to the null model term $P(x,y)$.

As for graphs, different choices for the \emph{null model} $P(x,y)$ may be sensible. 
For simplicity we here restrict the discussion again to a Newman--Girvan-type null model:
\begin{align}
	P(x,y) = \frac{1}{\mu} k(x)k(y)\,.
\end{align}

Note that since graphs generated for sufficiently smooth graphons~\cite{lovasz2006limits,avella2018centrality} will converge to graphons in the limit, the definition of the modularity surface~\cref{def:graphonModularity}, precisely corresponds to the (scaled) limiting object of the modularity matrix.
\begin{proof}(Sketch)
    It can be shown that the normalized degree of node $i$ converges as $N\rightarrow \infty$~\cite{avella2018centrality} and accordingly the degree density of the graph will converge to the graphon density $\mu$.
    This implies that the null model term will be well defined in the limit.
    Since furthermore the (scaled) adjacency matrix $\mathbf{A}/N$ will converge to the graphon $W$, both terms that define the modularity surface converge and are well defined.
\end{proof}

Given a group assignment $g(x)$, we further define the (relative) size $S_{i} \in (0,1]$ of a community $c_i$ as
\begin{align}
    S_{i} = \int_0^1 \delta (g(x),c) \ dx\,.
\end{align}
The maximal size $S_{i}$ of any community equals one, which indicates that there exists a single group consisting of all nodes. 

Analogously to graphs, we can now detect communities in graphons by finding a function $g$ that maximises the graphon modularity \cref{eqn:modularityGraphon}.
\begin{definition}[Modularity-maximisation problem for graphons]\label{def:graphonModularityMaximisation}
    Given a non-empty graphon $W(x,y)$ and a graphon null model $P(x,y)$ that define the modularity surface ${B(x,y)=W(x,y)-P(x,y)}$, the \emph{modularity-maximisation problem} is:
\begin{align}
    \max_{g(x)} & \int_0^1 \int_0^1 B(x,y)\delta(g(x),g(y))\ dxdy\,,
\end{align}
where $g$ is a group assignment function such that $S_{i} > 0$ for all communities $c_i$, i.e., each community has a nonzero measure.
\end{definition}

\begin{remark}[Measure zero sets and $L_2$ equivalence]
    Recall how a graphon can only be defined meaningfully up to measure preserving transformations.
    The same is true, \emph{mutatis mutandis}, for the group assignment function $g$.
    Indeed, from the above \cref{def:graphonModularity} of graphon modularity, it should be clear that a group assignment function can only be meaningfully defined up to $L_2$ equivalence.
    Specifically, let $L_2([0,1])$ denote the Hilbert space of functions $f:[0,1]\rightarrow \mathbb{R}$ with inner product $\langle f_1, f_2\rangle = \int_0^1f_1(x)f_2(x)dx$ and norm $\|f_1\|=\sqrt{\langle f_1,f_1\rangle}$.
    The elements of $L_2([0,1])$ are the equivalence classes of integrable functions that differ only on measure zero sets, i.e., we identify two function $f_1\equiv f_2$ with each other if $\|f_1-f_2\|=0$.

    For instance, changing the group assignment value $g(x)$ for any single $x\in [0,1]$ will not alter the modularity $Q(g)$ or the size of the communities $S_i$.
    This measure preserving change of $g$ leads to a non-identfiability of the precise function $g$ in the optimization of graphon modularity.
    However, analogous to the possibility of permuting the group labels, this non-identifiability does not lead to practical problems, as in practice we are only concerned with group assignment functions up to $L_2$ equivalence.
    Similarly, we only consider communities with nonzero measure, i.e, we will require that $S_{i}>0$ for all $c_i$, as specified in \cref{def:graphonModularityMaximisation}.
\end{remark}

As for modularity optimization for graphs, finding an optimal solution for the graphon modularity optimization problem is only possible for special cases. 
In \Cref{subsubsec:optimisation}, we thus explore two procedures to find either analytical expressions of the optimal community structure, or approximate solutions via numerical algorithms.

\section{Optimising graphon-modularity}\label{subsubsec:optimisation}
Here we explore two approaches to identify a maximum-modularity partition for graphons (see \Cref{fig:graphonOptimisation} for schematic representations). 
The first approach is to discretise the modularity surface $B(x,y)$ (see \Cref{def:graphonModularity}) and use a generalised Louvain algorithm ({\sc GenLouvain}) to find a group assignment function $g$.
The second approach is (semi-)analytical and works if we can constrain the set of group assignment function $g$ to be monotonically increasing, which we can do for certain synthetic graphons.
In this case we can find the exact maxima of modularity.
We will show in \Cref{subsec:syntheticGraphons} that for selected synthetic graphons both methods return essentially identical results, providing some further validation for the Louvain heuristic.

\subsection{Maximisation of graphon modularity via discretization}
We use the {\sc GenLouvain} algorithm~\cite{genlouvain} on a piecewise constant approximation of $W(x,y)$, as illustrated in \Cref{fig:graphonOptimisation}, to heuristically optimize graphon modularity.
{\sc GenLouvain} is a variant of the fast \emph{Louvain}-algorithm~\cite{blondel2008fast}, which was originally designed for standard modularity optimization on simple graphs.
To apply the \textsc{GenLouvain} algorithm, we first need to discretize the graphon $W(x,y)$ appropriately by trading off two aspects.
First, we need to chose a fine enough grid to capture the variation of the graphon $W(x,y)$ in $x$ and $y$ direction.
Second, we would like to choose an as coarse as possible grid, to limit the computational costs of the optimization performed via {\sc  GenLouvain}.

In the following we approximate $W(x,y)$ using a uniformly spaced grid of size $2000\times 2000$ unless otherwise stated. 
For this grid size, detecting community structure in discretised graphons is possible with commodity hardware in a few minutes.
We observe that this choice of the discretization for $W(x,y)$ is fine enough for all the problems considered in this paper.
Increasing the resolution of the grid further has essentially no effect for the results, in practice.

We remark, that more elaborate discretization schemes are conceivable and might result in computational gains.
For instance, one could employ multigrid discretization schemes to obtain a better approximation of local features of the modularity surface, without incurring a large extra computational cost~(for a review of multigrid schemes see~\cite{stuben2001review}).

\begin{figure}[t]
\begin{center}
\includegraphics[width=0.49\textwidth]{./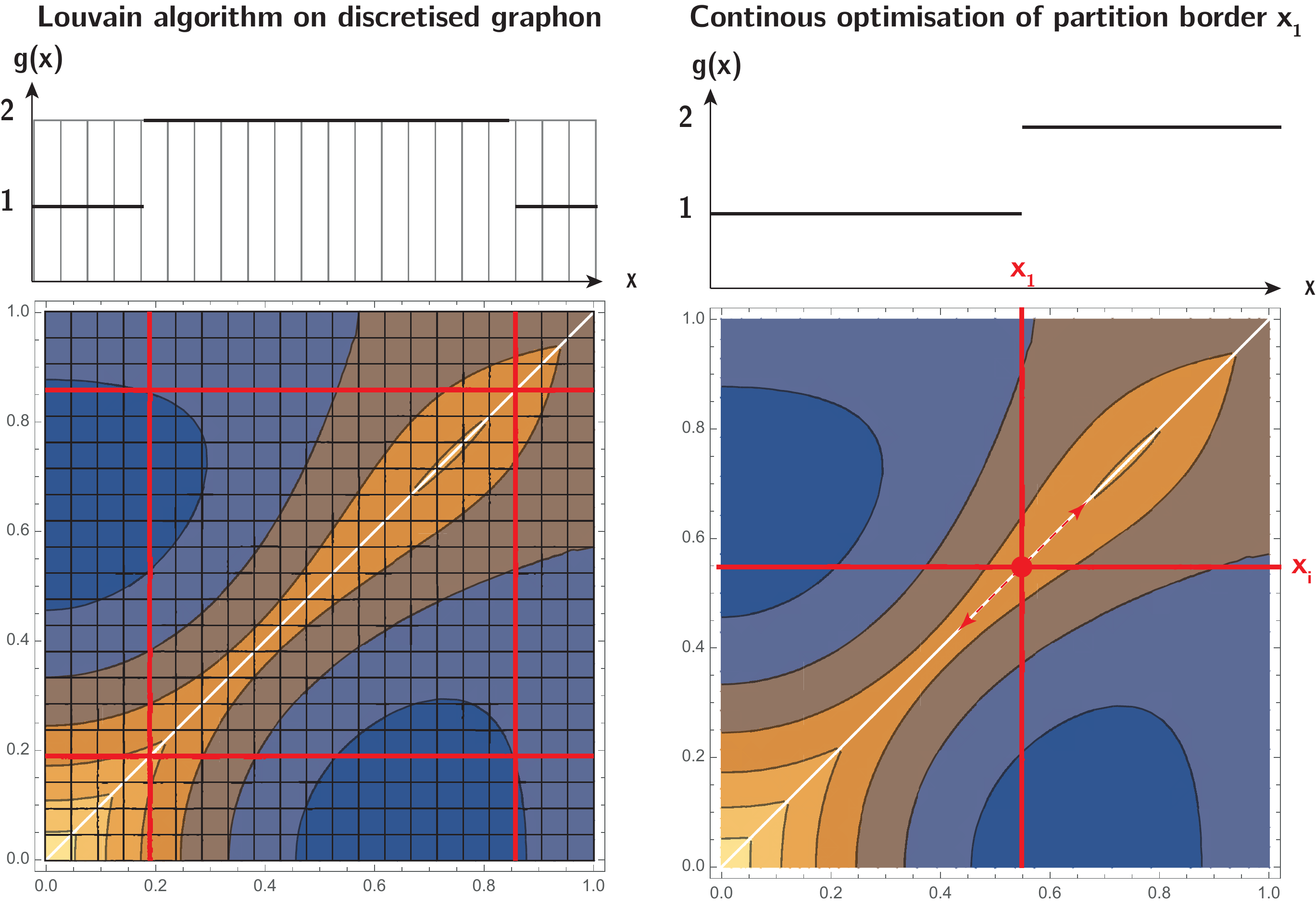}
\end{center}
\caption{\textbf{Optimizing graphon modularity.} We discuss two procedures to maximise graphon-modularity. (Left panel) We discretise the graphon into intervals $[x,x +\Delta x]$ and apply a Louvain algorithm to optimise the modularity function for the resulting matrix. (Right panel) We restrict the modules to be continuous intervals (in this example $[0,x_1]$ and $(x_1,1]$). We optimise the modularity by continuously varying the border $x_1$ between the communities.
  In this schematic, there exist only two communities, although in general there can be any number of communities.}
  \label{fig:graphonOptimisation}
\end{figure}

\subsection{Maximisation of graphon modularity via continuous optimisation}
We now consider a setup in which we can analytically establish the optimal partitions of a graphon in terms of the modularity function.
This provides us with a way to validate the results we obtain from the discretization based graphon-modularity optimization outlined in the previous section.
Moreover, it highlights that (within certain situations) tools from continuous optimization can be employed to analyse modularity maximizing partitions of graphons.

To this end, we restrict the possible group assignment functions $g(x)$ to be piecewise constant on $c$ intervals:
\begin{equation}
    g(x) = 
    \begin{cases}
        1 \quad \text{if} \quad x\in[0,x_1)\\
        2 \quad \text{if} \quad x\in[0,x_1)\\
        \vdots \\
        c \quad \text{if} \quad x\in[x_{c-1},1]\\
    \end{cases}
\end{equation}
For certain synthetic graphons the partitions can be assumed to be of such a form.

If the group assignment function $g(x)$ is of such a form, we can rewrite the graphon modularity function as:
\begin{align}
	Q(g) = \frac{1}{\mu} &\sum_{i=1}^c L(x_{i-1},x_i) \,,\ \text{with} \\
	L(a,b) =& \int_{a}^{b} \int_{a}^{b} B(x,y) dx dy = 2\int_{a}^{b} \int_{a}^{x}  B(x,y)dxdy\,,
\end{align}
where the last equality follows from the symmetry ($B(x,y)=B(y,x)$) of the modularity surface.
We call the function $L(a,b)$ a \emph{modularity slice} as the modularity function may be seen as a simple linear sum of these slices.
Note that within the above formulation, we have thus restricted the (combinatorial) optimization problem to a much simpler optimization with $c-1$ degrees of freedom, namely finding the end-points of the first $c-1$ intervals over which $g(x)$ is constant.
For instance, if we know that there are $c=3$ communities we obtain:
\begin{align}
	Q(g) \propto L(0,x_1) + L(x_1,x_2) + L(x_2,1)\,,
\end{align}
which has only $x_1$ and $x_2$ as free variables.
This continuous optimisation problem can in many cases be solved analytically. 
When this is not directly possible, however, we can optimise it using standard optimisation procedures, such as the Nelder--Mead method~\cite{nelder1965simplex}.
In the following, we will use the sliced-modularity approach to prove the maximum-modularity partition for a synthetic graphon. In the appendix, we discuss a generalised sliced-modularity approach, which allows the detection of communities in settings when less is known about the location of communities on the line.

\section{Modularity optimization for synthetic graphons}\label{subsec:syntheticGraphons}

In this section we explore the detection of community structure for different given synthetic graphons, and show how the obtained group assignments provide some simplified description block description of the graphons.

\subsection{Graphons with zero modularity surface}\label{subsec:withoutCommunity}
Maximising the graphon-modularity returns a group assignment function $g(x)$ with the highest graphon-modularity $Q$. 
If the modularity surface  of the graphon is $B(x,y)=0$, however, all functions $g(x)$ have the same modularity $Q=0$.
In this case, it is not possible to find a partition that has a higher modularity than any other.
Accordingly, we will say that such a graphon does not exhibit a community structure. 
Note that in contrast to the graph case, such a degenerate situation is possible even if the graphon itself is non-zero and well defined.

\begin{proposition}
Graphons of the form $W(x,y)=f(x)f(y)$ do not have community structure.
\label{theorem:multiplicative}
\end{proposition}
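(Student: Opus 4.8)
The plan is to prove the stronger statement that the modularity surface of a product graphon vanishes almost everywhere; by the convention introduced just above the proposition, $B\equiv 0$ is exactly what it means for $W$ to have no community structure, since then $Q(g)=0$ for every admissible group assignment $g$. So the whole argument reduces to checking that the Newman--Girvan null model of $W(x,y)=f(x)f(y)$ equals $W$ itself.

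First I would fix a harmless normalisation of $f$. Because $W$ is $[0,1]$-valued, for $x\neq y$ we have $f(x)f(y)=W(x,y)\ge 0$, so $f$ cannot take values of opposite sign on two sets of positive measure; replacing $f$ by $-f$ if necessary (which leaves $W$ unchanged) I may assume $f\ge 0$ almost everywhere. Set $\alpha:=\int_0^1 f(t)\,dt\ge 0$. If $\alpha=0$ then $f=0$ a.e., hence $W=0$ a.e.; but then $\mu=0$ and the null model $P(x,y)=k(x)k(y)/\mu$, and thus the graphon modularity, is undefined, so this degenerate ``empty'' case is excluded and $\alpha>0$.

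Next comes the short computation. The degree function is
\[
  k(x)=\int_0^1 f(x)f(y)\,dy=\alpha\, f(x),
\]
and the edge density is $\mu=\int_0^1 k(x)\,dx=\alpha\int_0^1 f(x)\,dx=\alpha^2>0$, so $P$ is well defined. Substituting into the Newman--Girvan null model gives
\[
  P(x,y)=\frac{1}{\mu}\,k(x)k(y)=\frac{\alpha f(x)\cdot\alpha f(y)}{\alpha^2}=f(x)f(y)=W(x,y),
\]
so that $B(x,y)=W(x,y)-P(x,y)=0$ for almost every $(x,y)\in[0,1]^2$.

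Finally I would conclude using \Cref{eqn:modularityGraphon}: since $B\equiv 0$ in $L_2([0,1]^2)$, we get $Q(g)=0$ for every group assignment function $g$ whose communities all have positive measure, so no partition has strictly larger modularity than any other and $W$ has no community structure. This is exactly the rank-one case flagged in the contributions --- the integral operator with kernel $f(x)f(y)$ has rank one, and every such graphon is ``modularity-flat''. I do not anticipate a genuine obstacle; the only points needing care are the sign/degeneracy discussion that guarantees $\mu>0$ so that $P$ (hence $Q$) is well defined, and phrasing the conclusion in terms of the paper's definition of ``no community structure'' rather than in terms of a distinguished optimal $g$.
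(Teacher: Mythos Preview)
Your proof is correct and follows essentially the same route as the paper: compute $k(x)=f(x)\int_0^1 f(y)\,dy$, $\mu=\bigl(\int_0^1 f\bigr)^2$, observe that $P(x,y)=k(x)k(y)/\mu=f(x)f(y)=W(x,y)$, hence $B\equiv 0$ and $Q(g)=0$ for all $g$. Your additional care with the sign of $f$ and the exclusion of the degenerate $\mu=0$ case is more explicit than the paper's version, but the argument is the same.
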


\begin{proof}
Let $W(x)$ be a graphon of the form $W(x,y)=f(x)f(y)$. 
The degree function is thus $k(x) = f(x) \int_0^1 f(y) dy$, the edge density is $\mu = \int_0^1 f(y) dy\int_0^1 f(x) dx$ and the null model term can be computed as $P(x,y) = f(x)f(y)$. 
Therefore the modularity surface is $B(x,y)=0$  and it follows that the modularity function is zero for all partitions.
\end{proof}

An example of a graphon without community structure is the graphon associated to the Erd\H os--R\'enyi (ER) graph model $G(n,p)$, which can be represented as a graphon $W(x,y)=p$ with a constant connection probability $p \in [0,1]$. 
Indeed the ER graphon belongs to the class of multiplicative graphons of~\Cref{theorem:multiplicative} as $W(x,y)=\sqrt{p}\sqrt{p} = f(x)f(y)$ and thus has no community structure.
In fact, the reverse is also true and all non-empty graphons without community structure are multiplicative (up to $L_2$ equivalence).

\begin{proposition}
Let $W(x,y)$ be a non-empty graphon with vanishing modularity surface $B(x,y)=0$. Then, the graphon is of the form $W(x,y)=f(x)f(y)$.
\label{theorem:zero}
\end{proposition}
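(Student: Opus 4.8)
The plan is to show that the condition $B(x,y) = W(x,y) - \frac{1}{\mu}k(x)k(y) = 0$ forces $W$ to be a rank-one kernel, i.e.\ $W(x,y) = \frac{1}{\mu}k(x)k(y)$, which is manifestly of the product form with $f = k/\sqrt{\mu}$. So the substance of the statement is almost immediate once one unpacks the definition of the null model: the hypothesis $B \equiv 0$ \emph{is} the factorization, modulo checking that the constant $\mu$ is consistent and that we may take the obvious square root.

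First I would write $W(x,y) = \frac{1}{\mu}k(x)k(y)$ directly from $B \equiv 0$, where $k(x) = \int_0^1 W(x,y)\,dy$ and $\mu = \int_0^1 k(x)\,dx = \int_{[0,1]^2} W$. Since $W$ is a non-empty graphon, $\mu > 0$, so the division is legitimate. Then set $f(x) = k(x)/\sqrt{\mu}$; this is a well-defined, measurable, nonnegative function on $[0,1]$, and $f(x)f(y) = \frac{1}{\mu}k(x)k(y) = W(x,y)$, which is exactly the claimed form. One should also note $f(x) \le 1$ so that $f(x)f(y)$ takes values in $[0,1]$ as a graphon must — this follows since $k(x) \le 1$ pointwise and $\mu = \int k \le \max_x k(x) \cdot 1$, giving $k(x)/\sqrt\mu \le \sqrt{k(x)}/\sqrt{\int k} \le 1$ after a short estimate; more simply, $W(x,y)\in[0,1]$ already, and $f$ is just bookkeeping.

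For consistency I would verify the self-referential point: if $W(x,y) = \frac{1}{\mu}k(x)k(y)$ then integrating in $y$ gives $k(x) = \frac{1}{\mu}k(x)\int_0^1 k(y)\,dy = \frac{1}{\mu}k(x)\cdot\mu = k(x)$, so no contradiction arises; the representation is internally consistent. This is the only place where a ``circularity'' worry could surface, and it dissolves on inspection. I would also remark that the conclusion holds only up to $L_2$ equivalence, consistent with the surrounding discussion: any modification of $W$ on a measure-zero set leaves $k$, $\mu$, and $B$ unchanged, so ``is of the form $f(x)f(y)$'' should be read as ``equals $f(x)f(y)$ almost everywhere.''

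The only genuine subtlety — and hence the ``main obstacle,'' though it is minor — is whether one wants $f$ to be itself a legitimate graphon-marginal-type function or merely a measurable function; since the Proposition only claims the multiplicative \emph{form} $W(x,y) = f(x)f(y)$ and $f = k/\sqrt{\mu}$ is measurable and bounded, nothing more is needed. There is no hard analysis here: the statement is essentially the observation that ``zero modularity surface'' and ``rank-one kernel'' are the same condition, so the proof is a two-line unwinding of \Cref{def:graphonModularity} together with the Newman--Girvan null model $P(x,y) = \frac{1}{\mu}k(x)k(y)$.
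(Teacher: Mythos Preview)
Your proposal is correct and follows essentially the same approach as the paper: from $B(x,y)=0$ one reads off $W(x,y)=k(x)k(y)/\mu$ and sets $f(x)=k(x)/\sqrt{\mu}$, noting the equality holds in the $L_2$ sense. The paper's proof is just these two lines; your additional consistency check and range discussion are fine but not needed for the argument.
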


\begin{proof}
    As $W(x,y)$ and $P(x,y)$ are positive, if follows from $B(x,y)=0$ that $W(x,y)\equiv P(x,y)$ (in the sense of $L_2$ equivalence), where the null model term is given by $P(x,y)=k(x)k(y)/\mu$. 
    We can therefore write $W(x,y)=f(x)f(y)$ with $f(x)=k(x)/\sqrt{\mu}$. 
\end{proof}
Note that this implies that the linear (graphon) integral operator, i.e., the integral operator whose kernel is given by the graphon has rank 1

\subsection{A core-periphery model: the $\lambda$-graphon}\label{subseq:lambdaGraphon}

\begin{figure}[b!]
{\centering 
\includegraphics[width=0.99\textwidth]{./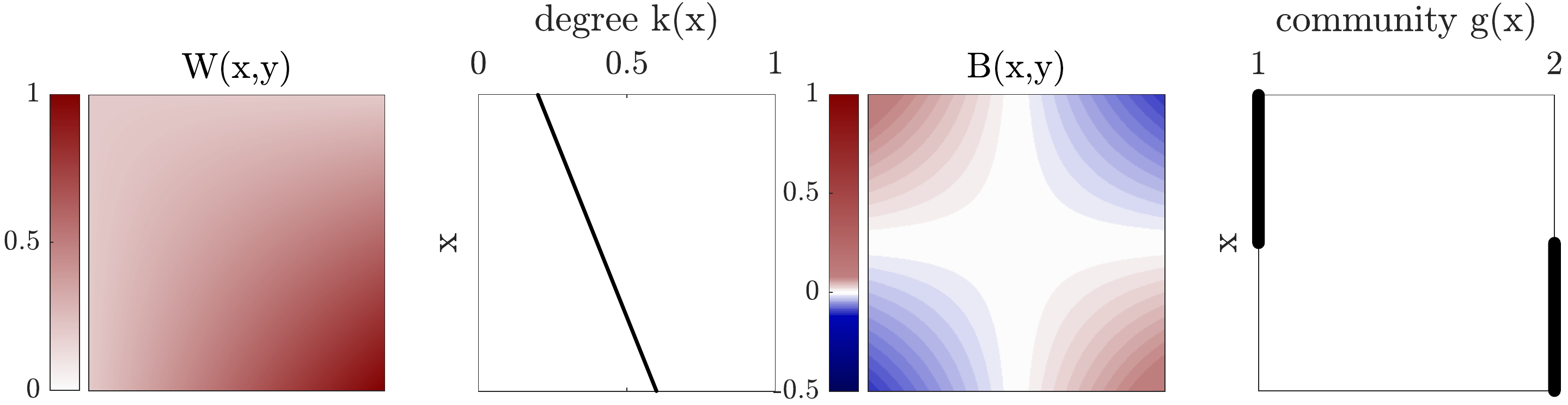}
\caption{\textbf{Graphon modulurity maximization for a graphon with core-periphery structure.} In a core-periphery $\lambda$-graphon (see text) with $\lambda = 0.2$ we detect two communities. The panels show the graphon $W(x,y)$, the degree $k(x)$, the modularity surface $B(x,y)$, and the community structure $g(x)$ returned by the GenLouvain algorithm, respectively. The sliced-modularity approach returns similar results for the community structure.}
\label{fig:lambdaGraphon}}
\end{figure}

We consider a graphon with core-periphery structure $W(x,y) = (1-\lambda)xy + \lambda$, modulated by a parameter $\lambda \in [0,1]$. 
We will call this particular graphon $W(x,y)$ the $\lambda$-graphon in the following.
Note that the $\lambda$-graphon may be seen as a convex mixture of (a) a flat connectivity profile with probability $\lambda$, corresponding to random ER-like connectivity, and (b) a coordinate dependent connectivity profile $xy$, which may be interpreted in terms of a continuous core-periphery structure.
Indeed, the larger the coordinate of the node, the higher its connectivity, as computing the degree function of the $\lambda$-graphon confirms:
\begin{equation*}
    k(x) = \lambda +  (1-\lambda)x/2.
\end{equation*}
The edge density of the $\lambda$-graphon is $\mu = (1-\lambda)/4 + \lambda$ and, accordingly, the modularity surface is
\begin{equation*}
    B(x,y)=(1-\lambda)\lambda \frac{(2x-1)(2y -1)}{(1+3\lambda)}
\end{equation*}

Note that when $\lambda=0$ and $\lambda=1$ we have $B(x,y) =0$ and we cannot detect communities for these cases as discussed above.
For $\lambda=1$, the $\lambda$-graphon becomes the constant graphon $W(x,y)=1$, which clearly does not have a community structure as all nodes are equivalent.
For $\lambda =0$ we obtain $W(x,y)=xy$, which does not have a community structure because of its multiplicative structure.
Numerically optimising the graphon modularity with the {\sc GenLouvain}-approach for $\lambda=0$ and $\lambda=1$ yields indeed a single community with $Q=0$.

Let us now  focus on scenarios for which $\lambda \in (0,1)$. 
In Fig.~\ref{fig:lambdaGraphon}, we show the degree $k(x)$, the modularity surface $B(x,y)$, and the detected community structure $g(x)$ for $\lambda=0.2$ as an example.
Optimising modularity with the {\sc GenLouvain}-approach for general $\lambda \in (0,1)$, we always obtain a partition into two communities that split the unit interval in half at $x=1/2$. 
To confirm this numerical results let us use the sliced-modularity approach to analytically compute the optimal community structure for two continuous groups.
First we compute the modularity slice:
\begin{align}
	L(a,b) = \frac{(a-b)^2 (-1+a+b)^2 (1- \lambda)\lambda }{3\lambda + 1}\,.
\end{align}
To obtain the optimal border $x_1$ between the two communities we maximise 
\begin{align}
	L(0,x_1;\lambda)+L(x_1,1;\lambda) = \frac{2(1-\lambda)\lambda}{1+3\lambda} (x_1-1)^2 x_1^2 = \kappa(x_1-1)^2 x_1^2 \,.
	\label{eqn:lambdaL}
\end{align}
As $\lambda \in (0,1)$, the factor $\kappa =2(1-\lambda)\lambda/(1+3\lambda)$ is greater than zero, and maximising the modularity is therefore equivalent to maximising $h(x) = (x_1-1)^2 x_1^2$. 
The local extrema of $h(x_1)$ with $dh/dx_1=0$ are $0$, $1/2$, and $1$. 
Evaluating the second derivative reveals that $0$ and $1$ are local minima and that $x=1/2$ is a maximum. 
For $\lambda\in (0,1)$ the optimal modular structure thus indeed consists of two equal-sized communities with the border at $x_1=1/2$, confirming the numerical results from the \textsc{GenLouvain} approach.

Using the above information about the optimal split for $\lambda \in (0,1)$, we can also calculate the maximum modularity as a function of the $\lambda$ parameter:
\begin{align}
Q_{\text{max}}(\lambda) =  \frac{L(0,1/2;\lambda)+L(1/2,1;\lambda)}{\mu(\lambda)} = \frac{\lambda (1-\lambda)}{2(3\lambda +1)^2}\,,
\end{align}
which has its maximum $Q_{\text{max}}=1/32=0.03125$ at $\lambda =1/5$. 
The example shown in Fig.~\ref{fig:lambdaGraphon} is therefore the $\lambda$-graphon with the largest possible modularity score.
Note that as $\lim _{\lambda \to 0} Q_{\text{max}}(\lambda) = 0$ and  $\lim _{\lambda \to 1} Q_{\text{max}}(\lambda) = 0$, the maximum modularity $Q_{\text{max}}$ vanishes when the $\lambda$-graphons  approach the boundary cases without community structure.
In this case the optimal value of modularity is thus continuous in $\lambda$.

\subsection{A uniform attachment model: the $\max$ graphon}\label{subseq:maxGraphon}

\begin{figure}[b!]
{\centering \includegraphics[width=0.99\textwidth]{./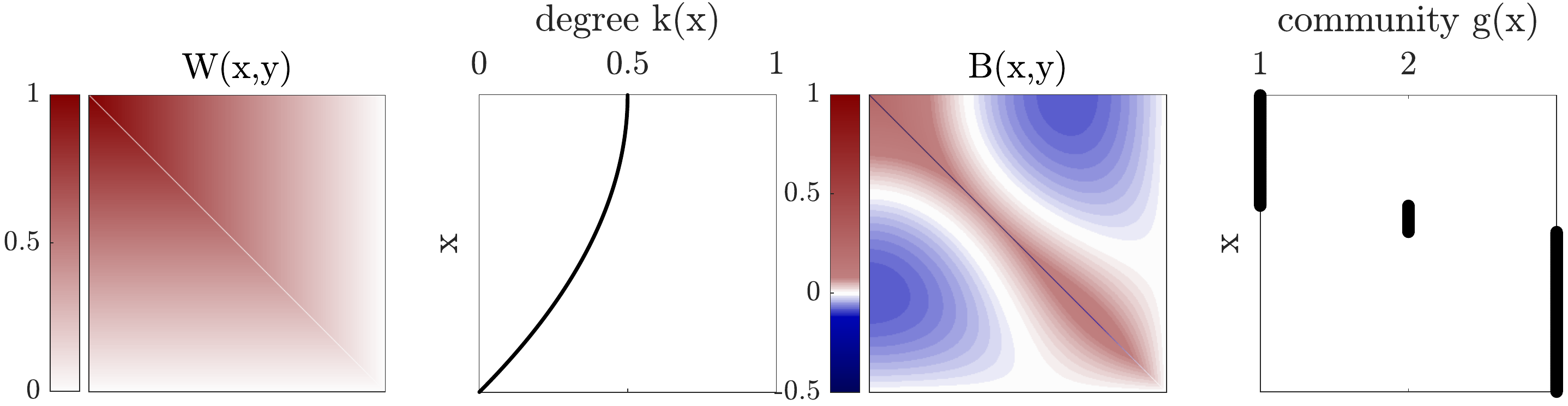}}

\caption{\textbf{Graphon modularity maximisation for a graphon with uniform attachment.} For the max graphon discussed in the text, we obtain a partition consisting of three communities. The panels show the graphon $W(x,y)$, the degree $k(x)$, the modularity surface $B(x,y)$, and the community structure $g(x)$ returned by the GenLouvain algorithm, respectively. The sliced-modularity apporach returns similar results for the community structure.}
  \label{fig:Maxgraphon}
\end{figure}

Synthetic models for growing networks have been widely used to model phenomena observed in real-world networks, such as long-tailed degree distributions~\cite{albert2002statistical}. 
These models typically consist of an iterative procedure that adds nodes and edges sequentially until a certain number of nodes is reached.

Some graphons can be seen as a limit of such a network growing process~\cite{borgs2011limits}. 
The sequence of \emph{growing uniform attachment} graphs we discuss next is a particular example of such a growth process that exhibits such a convergence.
We start with a graph $G_1$ consisting of a single node and no edges.
For $n\geq2$, we then construct $G_n$ from $G_{n-1}$ by adding a new vertex and adding every possible not already present edge in the network with probability $1/n$.
It can be shown~\cite{borgs2011limits} that this graph sequence almost surely converges to the \emph{max graphon} displayed in Fig.~\ref{fig:Maxgraphon}, which is defined as
\begin{align}
	W(x,y) &=  1- \max(x,y)
\end{align}

Computing the degree function of the $\max$ graphon as $k(x) = (1-x^2)/2$, we obtain a modularity surface of
\begin{align}
	B(x,y) = 1- \max(x,y) - \frac{3}{4}(x^2-1)(y^2-1)\,.
\end{align}
Optimising this function with the {\sc GenLouvain} approach yields a partition of the unit interval into the sets $[0,0.3730)$, $[0.3730,0.4605)$, and $[0.4605,1]$, corresponding 
to a medium sized community that consists of the highest-degree nodes, a small community consisting of medium degree nodes, and a large community consisting of all small degree nodes.
As we can see form the modularity surface in \Cref{fig:Maxgraphon}, these three sets provide indeed an approximate block-based description of the nodes within the graphon and may facilitate an easier interpretation of this model.

To confirm these results we again resort to our sliced-modularity approach using an Ansatz with three communities.
We obtain a modularity slice $L(a,b)$ that is a polynomial of order six in $a$ and $b$.
To estimate the maxima of the sixth order polynomial in terms of the partition endpoints $x_1, x_2$ we use Mathematica's Brent--Dekker method~\cite{brent1971algorithm} and obtain $x_1 \approx 0.369$ and $x_2 \approx 0.463$ (see online material). With higher computational effort, it is also possible to obtain the optimal community borders with the sliced-modularity approach for a larger number $c$ of communities. For $c=5$, for example, we obtain $x_1 = 0$, $x_2 \approx 0.369$, $x_3 \approx 0.463$, and $x_4 = 1$, which represents a community structure with two communities of vanishing size, yielding \emph{de facto}, the same community structure with $c=3$.
These results are well in line with the values obtained with {\sc GenLouvain}, given that the discretisation of the graphon for the {\sc GenLouvain} approach and possible numerical inaccuracies encountered when finding the maxima using the sliced-modularity method.

\section{Modularity of graphs sampled from unstructured and structured graphons}\label{subseq:sampling}

\subsection{Modularity of graphs sampled from graphons without structure}\label{subsec:samplingER}

In a number of benchmark tests for community detection on graphs the characterisation of a graphon with no community structure in~\Cref{subsec:withoutCommunity} is implicitly assumed to hold also for finite graphs sampled from a corresponding graphon model.
For instance, in \cite{fortunato2016community,Lancichinetti2009}, it is advocated that for graphs sampled from an ER model, a community detection algorithm should return a null result that indicates that there are no communities present.
While the idea is intuitively appealing, there are some issues when adopting this viewpoint in the finite regime.

In particular, a finite network sampled from a sparse ER model may not necessarily be representative of the underlying ER model, i.e., the random samples may not be concentrated around the expected featureless ER graphon~\cite{le2017concentration,joseph2016impact}.
This is in accordance with earlier results that state that samples drawn from an ER model can have a weak community structure arising from statistical fluctuations~\cite{guimera2004modularity}. 
In particular in~\cite{guimera2004modularity}, the authors showed that for large networks the maximum modularity approaches $Q \sim (pN)^{-2/3}$ (for a constant $p$, independent on $N$).
Therefore the modularity vanishes for $N\to \infty$, which matches the infinite size limit that graphons represent. 

%\newpage

\subsection{Modularity of graphs sampled from structured graphons}\label{subsec:pp}

\subsubsection{The planted partition model}
The planted partition (PP) model is a prominent random graph model with group structure, which can be described as a graphon.
For a $c$-block PP model each node $x$ belongs to exactly one group $g^{\star}(x)$, where $g^{\star}: [0,1] \rightarrow \{1,\ldots,c\}$ is the group assignment function.
 The $c$-block PPM graphon can then be written as:
\begin{align}
W(x,y) &= 
	\begin{cases}
        p_{\mathrm{in}} & \mathrm{if}\ g^{\star}(x) = g^{\star}(y)  \,,\\
        p_{\mathrm{ex}} & \mathrm{otherwise}, 
	\end{cases}
\end{align}
which means that nodes have an internal connection probability $p_{\mathrm{in}} \in [0,1]$ if they are in the same group, and an external connection probability $p_{\mathrm{ex}} \in [0,1]$ otherwise. 
To obtain an assortative community structure we assume $p_{\mathrm{ex}}<p_{\mathrm{in}}$. 

\begin{figure}[t]
{\centering \includegraphics[width=0.99\textwidth]{./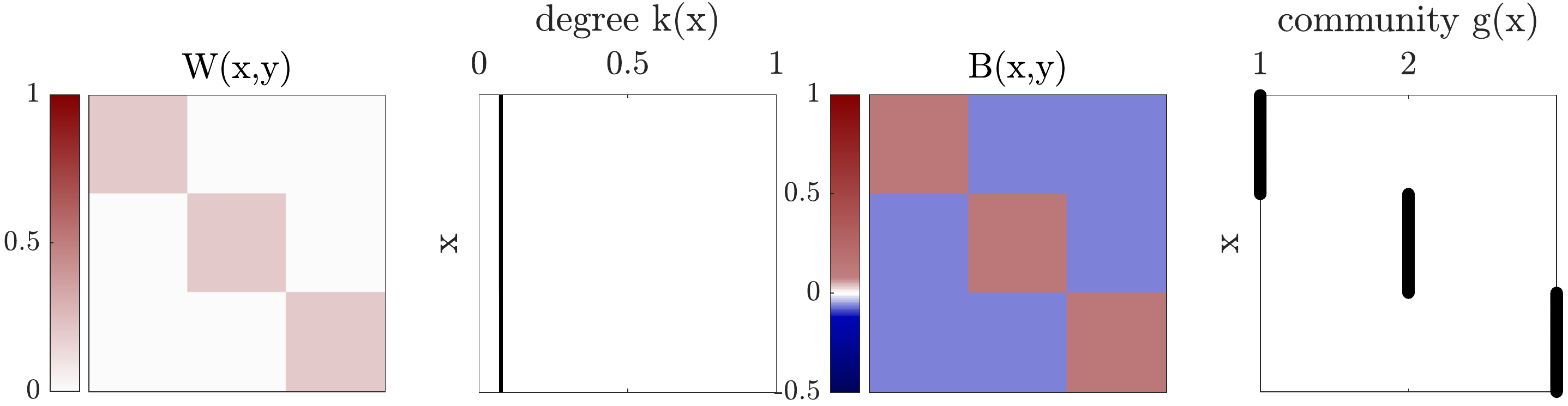}

    \caption{\textbf{Graphon modularity maximization for planted partition graphons.} In a planted partition graphon with $c=3$ blocks we recover the planted community structure. Each community has an internal connection probability of $p_{\mathrm{in}}=0.2$ and the connection probability between communities is $p_{\mathrm{ex}}=0.01$. The degree function is constant $k(x) = 1/3\times 0.2 + 2/3\times 0.01 \approx 0.0734$. The modularity surface $B(x,y)$ is positive in the communities and negative between communities. The community function $g(x)$ indicates that we correctly identify the three planted communities.}
  }
  \label{fig:SBMgraphonK3}
\end{figure}

For the PP graphon, we can compute the degree as: 
\begin{align}
	k_{\mathrm{PP}}(x) &= \frac{(p_{\mathrm{in}} + (c-1)p_{\mathrm{ex}}) }{c} \,,
\end{align}
and the total connectivity:
\begin{align}
	\mu_{\mathrm{PP}} &=  \frac{(p_{\mathrm{in}} + (c-1)p_{\mathrm{ex}}) }{c}\,,
\end{align}
Accordingly, the modularity surface is
\begin{align}
	B_{\mathrm{PP}}(x,y) &= 
	\begin{cases}
        \frac{c-1}{c} (p_{\mathrm{in}} - p_{\mathrm{ex}}) & \text{if } g^{\star}(x)=g^{\star}(y),\\
        \frac{(p_{\mathrm{ex}} - p_{\mathrm{in}})}{c} & \mathrm{otherwise}.
	\end{cases}
\end{align}
which is positive for nodes belonging to the same block and negative across blocks. 
Accordingly, we maximise the graphon-modularity if we choose the community structure $g(x)$ equal to the planted one $g^{\star}(x)$ and obtain a maximum modularity of

\begin{align}
 Q_{\text{max}} &= \frac{1}{\mu} \frac{c-1}{c^2}(p_{\mathrm{in}} - p_{\mathrm{ex}} ) = \frac{c-1}{c} \frac{p_{\mathrm{in}} - p_{\mathrm{ex}}}{ p_{\mathrm{in}} + (c-1)p_{\mathrm{ex}} }\,,
\end{align}
Thus for all $c>1$ and $p_{\mathrm{ex}}<p_{\mathrm{in}}$, the graphon-modularity is positive as desired for a model with community structure. In Fif.~\ref{fig:SBMgraphonK3}, we show an example PP graphon, for which we indeed perfectly recover the planted partition $g^{\star}(x)$.

\subsubsection{Modularity of graphs sampled from planted partition models and inferred graphons}
We now consider the numerical performance of modularity optimization for a PP model with $c=3$ groups in three scenarios:
(i) if we had access to the correct graphon, (ii) for a graph sampled from such a graphon, and (iii) for a graphon inferred from such a sampled graphon.

We start by considering the simple baseline case, in which we have access to an appropriately discretized version of the true graphon.
In this case, we can simply use the {\sc GenLouvain} algorithm, or any other approach and recover the $c=3$ planted communities, as long as $p_{\mathrm{ex}}<p_{\mathrm{in}}$ (results not shown).

\begin{figure}[t]
{\centering \includegraphics[width=0.99\textwidth]{./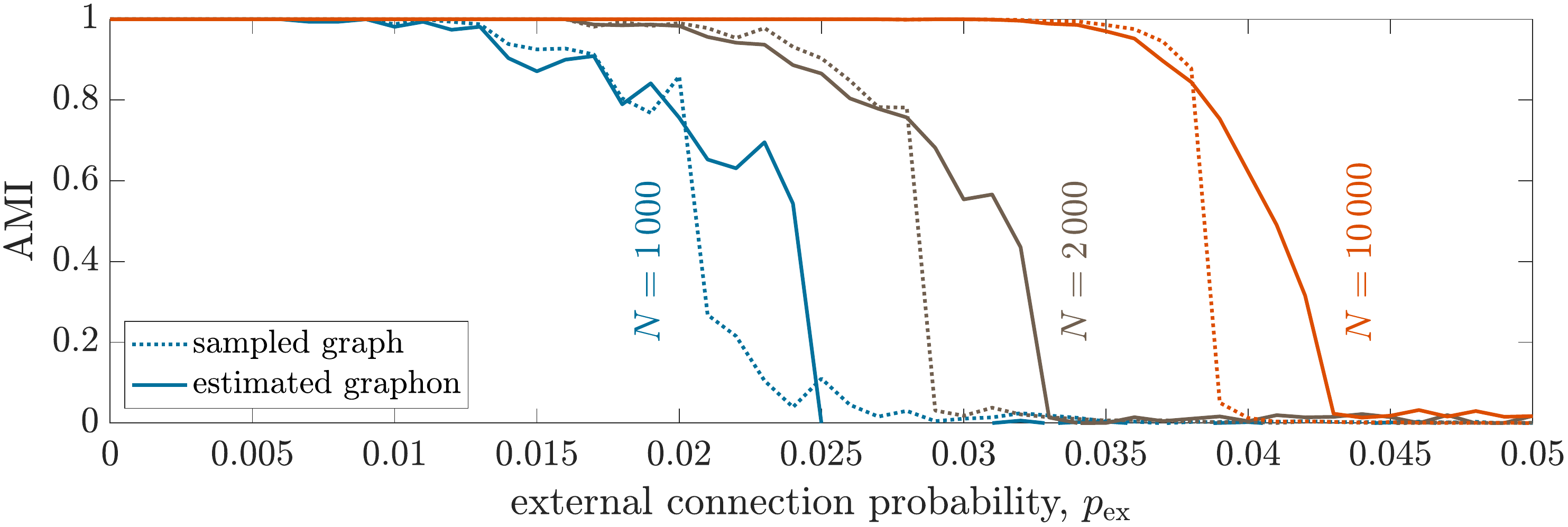}
    \caption{\textbf{Numerical comparison of graphon modularity and graph-based modularity optimization.} We measure the alignment between community structure in graphs sampled from the graphon with the planted partition (dotted lines), and graphons estimated from the sampled graphs (solid lines) as the adjusted mutual information (AMI). 
        The larger the size $N$ of the sampled networks, the better is the recovery of the planted partition. In all cases, there exists a regime for which the graphon estimation improves the detected community structure. Here we fix $p_\text{in}=0.05$.}
  \label{fig:SBMgraphonPvary}
  }
\end{figure}

Next we consider modularity optimization for graphs sampled from such such a planted partition model graphon, and for the graphons inferred from such samples.
We construct planted partition graphons with $c=3$ planted partitions and internal connection probability $p_{\mathrm{in}}=0.05$, with varying external link probability $p_{\mathrm{ex}} \in [0,0.05]$. 
Then we sample networks of varying sizes $N \in \{1000, 2000, 10000\}$ from this graphon as follows.
First, we create $N$ nodes with associated uniformly spaced coordinates in $[0,1]$, such that $x_i= (i-1)/(N-1)$  is the position of node $i$.  
We then draw edges between all (unordered) pairs $(x_i,x_j)$ of nodes with probability $W(x_i,x_j)$ to obtain the symmetric adjacency matrix of an undirected graph.
Finally, we estimate the graphon from the sampled graph with a matrix-completion approach, which we choose because it is fast and does not have free hyperparameters~\cite{keshavan2010matrix}. 

We use {\sc GenLouvain} to optimise the modularity function of the sampled graphs and the inferred graphons and compare the detected partition with the planted one by computing the \emph{adjusted mutual information} (AMI)~\cite{vinh2010information}. 
A maximal value of AMI$=1$ indicates that two partitions are identical and a minimal value of AMI$=0$ indicates that two partitions do not provide more information about each other than expected by random chance.

The results of our numerical comparison are shown in Fig.~\ref{fig:SBMgraphonPvary}.
The dotted lines correspond to the results of modularity optimization on the sampled graphs, the solid lines show the results when we first infer a graphon and second apply modularity optimization on the inferred graphon.

We find that for modularity optimization both on the sampled graph as well as the inferred graphon, for small external connection probabilities $p_{\mathrm{ex}}$ the AMI is one and thus we recover the planted partitions perfectly. 
For larger connection probabilities $p_{\mathrm{ex}}$ the AMI decreases, which indicates that we do not fully recover the planted partitions.
We further see that if we increase the number of sampled nodes, the results obtained from modularity optimization improve, and we are able to find the correct partition even for larger values of $p_\text{ex}$ (i.e., for smaller differences $p_\text{in} - p_\text{ex}$), corresponding to the fact that the sampled graphs converge to the graphon, for which the detection of the planted structures is always possible if $p_\text{in}> p_\text{ex}$.
Indeed, it is known that for dense graphs described by graphons the accurate detection of planted communities is a problem that can be efficiently solved by many algorithms, in contrast to the case of sparse graphs for which a detectability limit exists~\cite{decelle2011inference}.

We find that the AMI curves for the estimated graphons follow a behaviour similar to the AMI curves for modularity optimization on the the sampled graphs.
Interestingly, however, for all considered graph sizes $N$ there exists a range of connection probabilities $p_{\mathrm{ex}}$ for which modularity optimization on the estimated graphon yields a better performance.
This indicates that graphon estimation can improve the performance of modularity optimization for recovering planted partitions in graphs by smoothing fluctuations in the observed connectivity structure.

\section{Modularity optimization and community structure for graphons estimated from empirical data}\label{subseq:empiricalExamples}
Real-world network data can be large but is always of finite size. 
We thus cannot observe graphons directly but rather finite graphs consisting of discrete nodes and edges.
Accordingly, we need to estimate graphons from finite observations.
Many different methods have been proposed for this estimation task~\cite{wolfe2013nonparametric,olhede2014network,zhang2017estimating,chan2014consistent}. 
Most of these graphon estimators are consistent, i.e., the estimation error vanishes as the number of nodes $N \rightarrow \infty$. 
However, different graphon estimators may estimate different graphons for the same finite graph, and identifying the most appropriate estimator for a certain data set is an open research question~\cite{gao2015rate}.

In the following, we employ three prototypical methods for graphon estimation from empirically observed graphs: (i) a \emph{sorting-and-smoothing} algorithm, which is a consistent histogram estimator~\cite{chan2014consistent}, (ii) a matrix-completion approach~\cite{keshavan2010matrix}, and (iii) \emph{universal singular value thresholding} (USVT)~\cite{chatterjee2015matrix}. We choose these three methods because they do not have hyperparameters that have to be chosen by the user.
Using the estimated graphons we then employ modularity maximisation using the {\sc GenLouvain} approach to obtain a simplified picture of the graphons in terms of community structure.
We find that the graphon-estimation approach can have a strong influence on the community structure detected by graphon-modularity maximisation. 

To quantify the extent to which the community structure $g_{\mathrm{graphon}}(x)$ obtained from the estimated graphon resembles the community structure $g_{\mathrm{network}}(x)$ obtained from a graph itself, we compute the AMI between both for six empirical networks (see Table~\ref{tab:normalisedInformation}). 
It is important to note that there is no ground truth in this setting~\cite{peel2017ground}, i.e., the communities obtained from direct modularity optimization on the observed graph are but one possible clustering of the network, similar to the clustering obtained from the graphon.
In fact in some cases one may even argue that the graphon modularity results are less prone to random fluctuations as the estimation strategies involved typically involve some kind of smoothing procedure (as observed for the PP graphon in \cref{subsec:pp}).
We find, that for all networks, the choice of the graphon estimator has a strong influence on the community structure that we detect in the graphon. 
The sort-and-smooth estimator leads to a community structure that differs strongly from the one detected in the network itself, as indicated by small AMI values. 
Modularity maximization using an inferred graphon based on a matrix-completion approach yields partitions that are commensurate with the partitions found from direct modularity optimization on the graph ($\text{AMI}>0.4$) for all data sets.
The ``best results'' is this sense are provided by the USVT estimator.
Interestingly, for the data sets analysed  here, the USVT estimator always yields the highest AMI if it does not give a zero result.
This zero AMI score occurs if the USVT estimator returns a constant graphon $W(x,y)=\text{const.}$, which means all partitions will have the same modularity score of $0$ (and thus no partition will be detected).
This suggests that the USVT algorithm is a good `first choice' for the estimation step of graphons for modularity optimization.
In the following we concentrate on the matrix completion and the USVT estimators as the sort-and-smooth approach yields results that are largely incomparable to the other methods.

Our results indicate that if there is strong community structure, direct modularity optimization and graphon-based modularity optimization yield the same results.
For instance, in the US-senate voting network~\cite{waugh2009party}, which has a strong community structure, we find that the modularity maximisation for graphon and network yield virtually the same partition.
In cases when the community structure in the graphon differs from the one in the network, we usually detect less communities in the graphon because the graphon-estimation smoothes some of the connectivity signal.
For the Facebook network the AMI is $0.45$, which means that more than half of the information of the graph partition can be revealed by clustering the graphon. 
For the brain connectivity network~\cite{hagmann2008mapping}, which has shown to have a modular structure that enables the parallel processing of the information~\cite{klimm2014individual,klimm2014resolving}, we find intermediate AMI of $0.54$. 
Our analysis indicates, that a privacy-preserving community detection via graphon modularity is indeed possible, but extent to which we loose information in comparison to the graph partition depends on the data set and the type of graphon-estimation algorithm. 
We postpone a more detailed investigation of this behaviour for future work.

\begin{table}[t]
\centering
\begin{small} 
\begin{tabular}{c| c | c || c  c  c} 
data set & reference & $N$ & \multicolumn{3}{c}{AMI($g_{\mathrm{network}}$, $g_{\mathrm{graphon}}$)}\\
 & & & Matrix completion & USVT & Sort-and-smooth \\ \hline
Zachary Karate Club & \cite{zachary1977information} & 34 & ${\mathbf{0.43}}$ &\~0 & 0.09 \\
Senate voting & \cite{waugh2009party} & 102 & 0.9286 & ${\mathbf{1}}$ & 0.01 \\
Facebook network & \cite{maier2017cover} & 329 & ${\mathbf{0.45}}$ & 0 & 0 \\
Brain connectivity & \cite{hagmann2008mapping} & 998 & 0.48 & ${\mathbf{0.54}}$ & 0.14 \\
Political blog & \cite{adamic2005political} & 1224 & 0.60 & ${\mathbf{0.79}}$ & 0.17 \\
Protein complex & \cite{klimm2020hypergraphs} & 8243 & 0.44 & ${\mathbf{0.60}}$ & 0.04\\
\end{tabular}
\vspace{0.2cm}
\end{small}
\caption{The agreement between partitions obtained from clustering a graph directly and clustering a graphon estimated from the graph depends strongly on the data set and the used estimation algorithm. We show the adjusted mutual information (AMI) between the partition $g_{\mathrm{network}}$ obtained from a graph and the partition obtained from the graphons $g_{\mathrm{graphon}}$ that we estimated from the graph, using three different approaches. We also provide a reference for the data and the number $N$ of nodes in each network. For each data, we highlight the estimation method that yields the best result. While the matrix-completion method always yields decent results ($\text{AMI}>0.4$), the USVT is better for some data sets. For all data sets, the sort-and-smooth algorithm yields small AMIs.}
\label{tab:normalisedInformation}
\end{table}

\section{Discussion}\label{sec:discussion}
In this manuscript, we considered the problem of modularity optimization from the perspective of graphons. 
We showed how a generalised modularity-maximisation algorithm for graphs can be used for modularity optimization on graphons after suitable discretization and discussed how in certain cases analytical solutions for the modularity optimization problem for graphs can be obtained.
For future research, exploring how far further insights into trace maximisation problems, such as modularity optimization, can be obtained by using such a perspective based on operators defined on a continuous domain would be of interest.
Interestingly, it has been shown recently that maximum-likelihood estimation of an stochastic block model (a problem closely related to Modularity optimization~\cite{pamfil2019relating,newman2016equivalence}) is equivalent to a discrete surface tension~\cite{boyd2019stochastic}, thus providing a connection to partial differential equations and continuous problem formulations.

There are also other avenues to explore in the future:
In this manuscript, we discussed graphon-modularity with the popular Newman--Girvan null model. 
Our framework does, however, also allow the use of other null models or a resolution parameter, which might reveal a hierarchical community structure in graphons. 

One limitation of graphons is that they describe limits of dense graphs~\cite{orbanz2014bayesian}.
Many empirical networks, however, are sparse. 
\emph{Exchangeable random measures} have been proposed as a way to construct sparse graph-variants~\cite{caron2017sparse}. 
It would be relevant for applications, to extend modularity-based approaches to also detect community structure in these objects.

From a privacy-preserving computing point of view, it would be interesting to explore to what extent a graphon description could be de-anonymised, when obtaining information about the network from which it was estimated.
Furthermore, an investigation in how far other graph-measures that can be extended to graphons allow for privacy-presering computation would be interesting.

%\newpage

\section{Code availability}

{\sc Matlab} and {\sc Mathematica} code to implement the discussed methods and reproduce all figures is available under \url{http://github.com/floklimm/graphon}. 

\bibliography{graphonCommunity} 

\begin{thebibliography}{10}

\bibitem{newman2010networks}
Mark E~J Newman.
\newblock {\em Networks: An Introduction (2nd Edition)}.
\newblock Oxford University Press, 2018.

\bibitem{fortunato2010community}
Santo Fortunato.
\newblock Community detection in graphs.
\newblock {\em Physics Reports}, 486(3):75--174, 2010.

\bibitem{fortunato2016community}
Santo Fortunato and Darko Hric.
\newblock Community detection in networks: A user guide.
\newblock {\em Physics Reports}, 659:1--44, 2016.

\bibitem{schaub2017many}
Michael~T Schaub, Jean-Charles Delvenne, Martin Rosvall, and Renaud Lambiotte.
\newblock The many facets of community detection in complex networks.
\newblock {\em Applied Network Science}, 2(1):4, 2017.

\bibitem{lewis2010function}
Anna C~F Lewis, Nick~S Jones, Mason~A Porter, and Charlotte~M Deane.
\newblock The function of communities in protein interaction networks at
  multiple scales.
\newblock {\em BMC Systems Biology}, 4(1):100, 2010.

\bibitem{traud2012social}
Amanda~L Traud, Peter~J Mucha, and Mason~A Porter.
\newblock Social structure of facebook networks.
\newblock {\em Physica A: Statistical Mechanics and its Applications},
  391(16):4165--4180, 2012.

\bibitem{bassett2017network}
Danielle~S Bassett and Olaf Sporns.
\newblock Network neuroscience.
\newblock {\em Nature Neuroscience}, 20(3):353, 2017.

\bibitem{porter2009communities}
Mason~A Porter, Jukka-Pekka Onnela, and Peter~J Mucha.
\newblock Communities in networks.
\newblock {\em Notices of the AMS}, 56(9):1082--1097, 1164--1166, 2009.

\bibitem{fortunato2007resolution}
Santo Fortunato and Marc Barthelemy.
\newblock Resolution limit in community detection.
\newblock {\em Proceedings of the National Academy of Sciences of the United
  States of America}, 104(1):36--41, 2007.

\bibitem{newman2004finding}
Mark E~J Newman and Michelle Girvan.
\newblock Finding and evaluating community structure in networks.
\newblock {\em Physical Review E}, 69(2):026113, 2004.

\bibitem{lovasz2006limits}
L{\'a}szl{\'o} Lov{\'a}sz and Bal{\'a}zs Szegedy.
\newblock Limits of dense graph sequences.
\newblock {\em Journal of Combinatorial Theory, Series B}, 96(6):933--957,
  2006.

\bibitem{jacobs2014unified}
Abigail~Z Jacobs and Aaron Clauset.
\newblock A unified view of generative models for networks: models, methods,
  opportunities, and challenges.
\newblock {\em arXiv preprint arXiv:1411.4070}, 2014.

\bibitem{aldous1981representations}
David~J Aldous.
\newblock Representations for partially exchangeable arrays of random
  variables.
\newblock {\em Journal of Multivariate Analysis}, 11(4):581--598, 1981.

\bibitem{hoover1979relations}
Douglas~N Hoover.
\newblock Relations on probability spaces and arrays of random variables.
\newblock {\em Preprint, Institute for Advanced Study, Princeton, NJ}, 2, 1979.

\bibitem{abbe2017community}
Emmanuel Abbe.
\newblock Community detection and stochastic block models: recent developments.
\newblock {\em The Journal of Machine Learning Research}, 18(1):6446--6531,
  2017.

\bibitem{athreya2017statistical}
Avanti Athreya, Donniell~E Fishkind, Minh Tang, Carey~E Priebe, Youngser Park,
  Joshua~T Vogelstein, Keith Levin, Vince Lyzinski, and Yichen Qin.
\newblock Statistical inference on random dot product graphs: a survey.
\newblock {\em The Journal of Machine Learning Research}, 18(1):8393--8484,
  2017.

\bibitem{orbanz2014bayesian}
Peter Orbanz and Daniel~M Roy.
\newblock Bayesian models of graphs, arrays and other exchangeable random
  structures.
\newblock {\em IEEE Transactions on Pattern Analysis and Machine Intelligence},
  37(2):437--461, 2014.

\bibitem{rombach2014core}
M~Puck Rombach, Mason~A Porter, James~H Fowler, and Peter~J Mucha.
\newblock Core-periphery structure in networks.
\newblock {\em SIAM Journal on Applied Mathematics}, 74(1):167--190, 2014.

\bibitem{avella2018centrality}
Marco Avella-Medina, Francesca Parise, Michael~T Schaub, and Santiago Segarra.
\newblock Centrality measures for graphons: Accounting for uncertainty in
  networks.
\newblock {\em IEEE Transactions on Network Science and Engineering}, 2018.

\bibitem{hagmann2008mapping}
Patric Hagmann, Leila Cammoun, Xavier Gigandet, Reto Meuli, Christopher~J
  Honey, Van~J Wedeen, and Olaf Sporns.
\newblock Mapping the structural core of human cerebral cortex.
\newblock {\em PLoS Biology}, 6(7), 2008.

\bibitem{maier2017cover}
Benjamin~F Maier and Dirk Brockmann.
\newblock Cover time for random walks on arbitrary complex networks.
\newblock {\em Physical Review E}, 96(4):042307, 2017.

\bibitem{jernigan2009gaydar}
Carter Jernigan and Behram F~T Mistree.
\newblock Gaydar: Facebook friendships expose sexual orientation.
\newblock {\em First Monday}, 2009.

\bibitem{borgs2015private}
Christian Borgs, Jennifer Chayes, and Adam Smith.
\newblock Private graphon estimation for sparse graphs.
\newblock In {\em Advances in Neural Information Processing Systems}, pages
  1369--1377, 2015.

\bibitem{borgs2018revealing}
Christian Borgs, Jennifer Chayes, Adam Smith, and Ilias Zadik.
\newblock Revealing network structure, confidentially: Improved rates for
  node-private graphon estimation.
\newblock In {\em 2018 IEEE 59th Annual Symposium on Foundations of Computer
  Science (FOCS)}, pages 533--543. IEEE, 2018.

\bibitem{gao2017control}
Shuang Gao and Peter~E Caines.
\newblock The control of arbitrary size networks of linear systems via graphon
  limits: An initial investigation.
\newblock In {\em 2017 IEEE 56th Annual Conference on Decision and Control
  (CDC)}, pages 1052--1057. IEEE, 2017.

\bibitem{petit2019random}
Julien Petit, Renaud Lambiotte, and Timoteo Carletti.
\newblock Random walks on dense graphs and graphons.
\newblock {\em arXiv preprint arXiv:1909.11776}, 2019.

\bibitem{vizuette2020sis}
R.~{Vizuete}, P.~{Frasca}, and F.~{Garin}.
\newblock Graphon-based sensitivity analysis of sis epidemics.
\newblock {\em IEEE Control Systems Letters}, 4(3):542--547, 2020.

\bibitem{coulson2016poisson}
Matthew Coulson, Robert~E Gaunt, and Gesine Reinert.
\newblock Poisson approximation of subgraph counts in stochastic block models
  and a graphon model.
\newblock {\em ESAIM: Probability and Statistics}, 20:131--142, 2016.

\bibitem{wolfe2013nonparametric}
Patrick~J Wolfe and Sofia~C Olhede.
\newblock Nonparametric graphon estimation.
\newblock {\em arXiv preprint arXiv:1309.5936}, 2013.

\bibitem{olhede2014network}
Sofia~C Olhede and Patrick~J Wolfe.
\newblock Network histograms and universality of blockmodel approximation.
\newblock {\em Proceedings of the National Academy of Sciences of the United
  States of America}, 111(41):14722--14727, 2014.

\bibitem{chatterjee2015matrix}
Sourav Chatterjee.
\newblock Matrix estimation by universal singular value thresholding.
\newblock {\em The Annals of Statistics}, 43(1):177--214, 2015.

\bibitem{airoldi2013stochastic}
Edo~M Airoldi, Thiago~B Costa, and Stanley~H Chan.
\newblock Stochastic blockmodel approximation of a graphon: Theory and
  consistent estimation.
\newblock In {\em Advances in Neural Information Processing Systems}, pages
  692--700, 2013.

\bibitem{zhang2017estimating}
Yuan Zhang, Elizaveta Levina, and Ji~Zhu.
\newblock Estimating network edge probabilities by neighbourhood smoothing.
\newblock {\em Biometrika}, 104(4):771--783, 2017.

\bibitem{yang2014nonparametric}
Justin Yang, Christina Han, and Edoardo Airoldi.
\newblock Nonparametric estimation and testing of exchangeable graph models.
\newblock In {\em Artificial Intelligence and Statistics}, pages 1060--1067,
  2014.

\bibitem{chan2014consistent}
Stanley Chan and Edoardo Airoldi.
\newblock A consistent histogram estimator for exchangeable graph models.
\newblock In {\em International Conference on Machine Learning}, pages
  208--216, 2014.

\bibitem{blondel2008fast}
Vincent~D Blondel, Jean-Loup Guillaume, Renaud Lambiotte, and Etienne Lefebvre.
\newblock Fast unfolding of communities in large networks.
\newblock {\em Journal of Statistical Mechanics: Theory and Experiment},
  2008(10):P10008, 2008.

\bibitem{grover2016node2vec}
Aditya Grover and Jure Leskovec.
\newblock node2vec: Scalable feature learning for networks.
\newblock In {\em Proceedings of the 22nd ACM SIGKDD international conference
  on Knowledge discovery and data mining}, pages 855--864, 2016.

\bibitem{newman2015generalized}
Mark E~J Newman and Tiago~P Peixoto.
\newblock Generalized communities in networks.
\newblock {\em Physical Review Letters}, 115(8):088701, 2015.

\bibitem{hoff2002latent}
Peter~D Hoff, Adrian~E Raftery, and Mark~S Handcock.
\newblock Latent space approaches to social network analysis.
\newblock {\em Journal of the American Statistical Association},
  97(460):1090--1098, 2002.

\bibitem{agnarsson2006graph}
Geir Agnarsson and Raymond Greenlaw.
\newblock {\em Graph Theory: Modeling, Applications, and Algorithms}.
\newblock Prentice-Hall, Inc., 2006.

\bibitem{Lovasz2006}
L{\'a}szl{\'o} Lov{\'a}sz and Bal{\'a}zs Szegedy.
\newblock Limits of dense graph sequences.
\newblock {\em Journal of Combinatorial Theory, Series B}, 96(6):933--957,
  2006.

\bibitem{borgs2008convergent}
Christian Borgs, Jennifer~T Chayes, L{\'a}szl{\'o} Lov{\'a}sz, Vera~T S{\'o}s,
  and Katalin Vesztergombi.
\newblock Convergent sequences of dense graphs i: Subgraph frequencies, metric
  properties and testing.
\newblock {\em Advances in Mathematics}, 219(6):1801--1851, 2008.

\bibitem{borgs2017graphons}
Christian Borgs and Jennifer Chayes.
\newblock Graphons: A nonparametric method to model, estimate, and design
  algorithms for massive networks.
\newblock In {\em Proceedings of the 2017 ACM Conference on Economics and
  Computation}, pages 665--672. ACM, 2017.

\bibitem{lovasz2012large}
L{\'a}szl{\'o} Lov{\'a}sz.
\newblock {\em Large networks and graph limits}, volume~60.
\newblock American Mathematical Soc., 2012.

\bibitem{glasscock2015graphon}
Daniel Glasscock.
\newblock What is a graphon?
\newblock {\em Notices of the AMS}, 62(1), 2015.

\bibitem{yang2016comparative}
Zhao Yang, Ren{\'e} Algesheimer, and Claudio~J Tessone.
\newblock A comparative analysis of community detection algorithms on
  artificial networks.
\newblock {\em Scientific Reports}, 6:30750, 2016.

\bibitem{rosvall2019different}
Martin Rosvall, Jean-Charles Delvenne, Michael~T Schaub, and Renaud Lambiotte.
\newblock Different approaches to community detection.
\newblock {\em Advances in Network Clustering and Blockmodeling}, pages
  105--119, 2019.

\bibitem{newman2006modularity}
Mark E~J Newman.
\newblock Modularity and community structure in networks.
\newblock {\em Proceedings of the National Academy of Sciences of the United
  States of America}, 103(23):8577--8582, 2006.

\bibitem{expert2011uncovering}
Paul Expert, Tim~S Evans, Vincent~D Blondel, and Renaud Lambiotte.
\newblock Uncovering space-independent communities in spatial networks.
\newblock {\em Proceedings of the National Academy of Sciences of the United
  States of America}, 108(19):7663--7668, 2011.

\bibitem{sarzynska2015null}
Marta Sarzynska, Elizabeth~A Leicht, Gerardo Chowell, and Mason~A Porter.
\newblock Null models for community detection in spatially embedded, temporal
  networks.
\newblock {\em Journal of Complex Networks}, 4(3):363--406, 2015.

\bibitem{macmahon2013community}
Mel MacMahon and Diego Garlaschelli.
\newblock Community detection for correlation matrices.
\newblock {\em arXiv preprint arXiv:1311.1924}, 2013.

\bibitem{bazzi2016community}
Marya Bazzi, Mason~A Porter, Stacy Williams, Mark McDonald, Daniel~J Fenn, and
  Sam~D Howison.
\newblock Community detection in temporal multilayer networks, with an
  application to correlation networks.
\newblock {\em Multiscale Modeling \& Simulation}, 14(1):1--41, 2016.

\bibitem{Chung2002}
Fan Chung and Linyuan Lu.
\newblock Connected components in random graphs with given expected degree
  sequences.
\newblock {\em Annals of Combinatorics}, 6(2):125--145, 2002.

\bibitem{brandes2007modularity}
Ulrik Brandes, Daniel Delling, Marco Gaertler, Robert Gorke, Martin Hoefer,
  Zoran Nikoloski, and Dorothea Wagner.
\newblock On modularity clustering.
\newblock {\em IEEE Transactions on Knowledge and Data Engineering},
  20(2):172--188, 2007.

\bibitem{genlouvain}
Lucas G.~S. Jeub, Marya Bazzi, Inderjit~S. Jutla, and Peter~J. Mucha.
\newblock A generalized {Louvain} method for community detection implemented in
  {\sc matlab}, {Version} 2.1, 2011-2014.

\bibitem{traag2019louvain}
Vincent~A Traag, Ludo Waltman, and Nees~Jan van Eck.
\newblock From {Louvain} to {Leiden}: guaranteeing well-connected communities.
\newblock {\em Scientific Reports}, 9(1):1--12, 2019.

\bibitem{stuben2001review}
Klaus St{\"u}ben.
\newblock A review of algebraic multigrid.
\newblock In {\em Numerical Analysis: Historical Developments in the 20th
  Century}, pages 331--359. Elsevier, 2001.

\bibitem{nelder1965simplex}
John~A Nelder and Roger Mead.
\newblock A simplex method for function minimization.
\newblock {\em The Computer Journal}, 7(4):308--313, 1965.

\bibitem{albert2002statistical}
R{\'e}ka Albert and Albert-L{\'a}szl{\'o} Barab{\'a}si.
\newblock Statistical mechanics of complex networks.
\newblock {\em Reviews of Modern Physics}, 74(1):47, 2002.

\bibitem{borgs2011limits}
Christian Borgs, Jennifer Chayes, L{\'a}szl{\'o} Lov{\'a}sz, Vera S{\'o}s, and
  Katalin Vesztergombi.
\newblock Limits of randomly grown graph sequences.
\newblock {\em European Journal of Combinatorics}, 32(7):985--999, 2011.

\bibitem{brent1971algorithm}
Richard~P. Brent.
\newblock An algorithm with guaranteed convergence for finding a zero of a
  function.
\newblock {\em The Computer Journal}, 14(4):422--425, 1971.

\bibitem{Lancichinetti2009}
Andrea Lancichinetti and Santo Fortunato.
\newblock Community detection algorithms: A comparative analysis.
\newblock {\em Physical Review E}, 80:056117, Nov 2009.

\bibitem{le2017concentration}
Can~M Le, Elizaveta Levina, and Roman Vershynin.
\newblock Concentration and regularization of random graphs.
\newblock {\em Random Structures \& Algorithms}, 51(3):538--561, 2017.

\bibitem{joseph2016impact}
Antony Joseph, Bin Yu, et~al.
\newblock Impact of regularization on spectral clustering.
\newblock {\em The Annals of Statistics}, 44(4):1765--1791, 2016.

\bibitem{guimera2004modularity}
Roger Guimera, Marta Sales-Pardo, and Lu{\'\i}s A~Nunes Amaral.
\newblock Modularity from fluctuations in random graphs and complex networks.
\newblock {\em Physical Review E}, 70(2):025101, 2004.

\bibitem{keshavan2010matrix}
Raghunandan~H Keshavan, Andrea Montanari, and Sewoong Oh.
\newblock Matrix completion from a few entries.
\newblock {\em IEEE Transactions on Information Theory}, 56(6):2980--2998,
  2010.

\bibitem{vinh2010information}
Nguyen~Xuan Vinh, Julien Epps, and James Bailey.
\newblock Information theoretic measures for clusterings comparison: Variants,
  properties, normalization and correction for chance.
\newblock {\em The Journal of Machine Learning Research}, 11:2837--2854, 2010.

\bibitem{decelle2011inference}
Aurelien Decelle, Florent Krzakala, Cristopher Moore, and Lenka Zdeborov{\'a}.
\newblock Inference and phase transitions in the detection of modules in sparse
  networks.
\newblock {\em Physical Review Letters}, 107(6):065701, 2011.

\bibitem{gao2015rate}
Chao Gao, Yu~Lu, Harrison~H Zhou, et~al.
\newblock Rate-optimal graphon estimation.
\newblock {\em The Annals of Statistics}, 43(6):2624--2652, 2015.

\bibitem{peel2017ground}
Leto Peel, Daniel~B Larremore, and Aaron Clauset.
\newblock The ground truth about metadata and community detection in networks.
\newblock {\em Science Advances}, 3(5):e1602548, 2017.

\bibitem{waugh2009party}
Andrew~Scott Waugh, Liuyi Pei, James~H Fowler, Peter~J Mucha, and
  Mason~Alexander Porter.
\newblock Party polarization in congress: A network science approach.
\newblock 2009.

\bibitem{klimm2014individual}
Florian Klimm, Javier Borge-Holthoefer, Niels Wessel, J{\"u}rgen Kurths, and
  Gorka Zamora-L{\'o}pez.
\newblock Individual node's contribution to the mesoscale of complex networks.
\newblock {\em New Journal of Physics}, 16(12):125006, 2014.

\bibitem{klimm2014resolving}
Florian Klimm, Danielle~S Bassett, Jean~M Carlson, and Peter~J Mucha.
\newblock Resolving structural variability in network models and the brain.
\newblock {\em PLoS Computational Biology}, 10(3), 2014.

\bibitem{zachary1977information}
Wayne~W Zachary.
\newblock An information flow model for conflict and fission in small groups.
\newblock {\em Journal of Anthropological Research}, 33(4):452--473, 1977.

\bibitem{adamic2005political}
Lada~A Adamic and Natalie Glance.
\newblock The political blogosphere and the 2004 us election: divided they
  blog.
\newblock In {\em Proceedings of the 3rd International Workshop on Link
  Discovery}, pages 36--43, 2005.

\bibitem{klimm2020hypergraphs}
Florian Klimm, Charlotte~M Deane, and Gesine Reinert.
\newblock Hypergraphs for predicting essential genes using multiprotein complex
  data.
\newblock {\em bioRxiv}, 2020.

\bibitem{pamfil2019relating}
A~Roxana Pamfil, Sam~D Howison, Renaud Lambiotte, and Mason~A Porter.
\newblock Relating modularity maximization and stochastic block models in
  multilayer networks.
\newblock {\em SIAM Journal on Mathematics of Data Science}, 1(4):667--698,
  2019.

\bibitem{newman2016equivalence}
Mark E~J Newman.
\newblock Equivalence between modularity optimization and maximum likelihood
  methods for community detection.
\newblock {\em Physical Review E}, 94(5):052315, 2016.

\bibitem{boyd2019stochastic}
Zachary~M Boyd, Mason~A Porter, and Andrea~L Bertozzi.
\newblock Stochastic block models are a discrete surface tension.
\newblock {\em Journal of Nonlinear Science}, pages 1--34, 2019.

\bibitem{caron2017sparse}
Fran{\c{c}}ois Caron and Emily~B Fox.
\newblock Sparse graphs using exchangeable random measures.
\newblock {\em Journal of the Royal Statistical Society: Series B (Statistical
  Methodology)}, 79(5):1295--1366, 2017.

\end{thebibliography}
\bibliographystyle{unsrt} 

\clearpage
\appendix

\section{Modularity slices for noncontiguously ordered communities}

In the main text, we introduce the sliced-modularity approach for contiguously ordered communities. Here, we demonstrate that it is possible to define modularity slices for noncontiguously ordered communities. This might be necessary since, while we can hope that methods for inferring graphons will notionally assign close coordinates to subsets of nodes from the same community, the algorithms for inferring graphons will not always assign all nodes in a community a set of contiguous co-ordinates; rather, we might have sets of nodes from the same community to be cut and so be fragmented into contiguous subsets and these subsets assigned noncontiguous locations in the graphon embedding (see~Fig.~\ref{fig:SI} for an example).

\begin{figure}[b!]
{\centering \includegraphics[width=0.49\textwidth]{./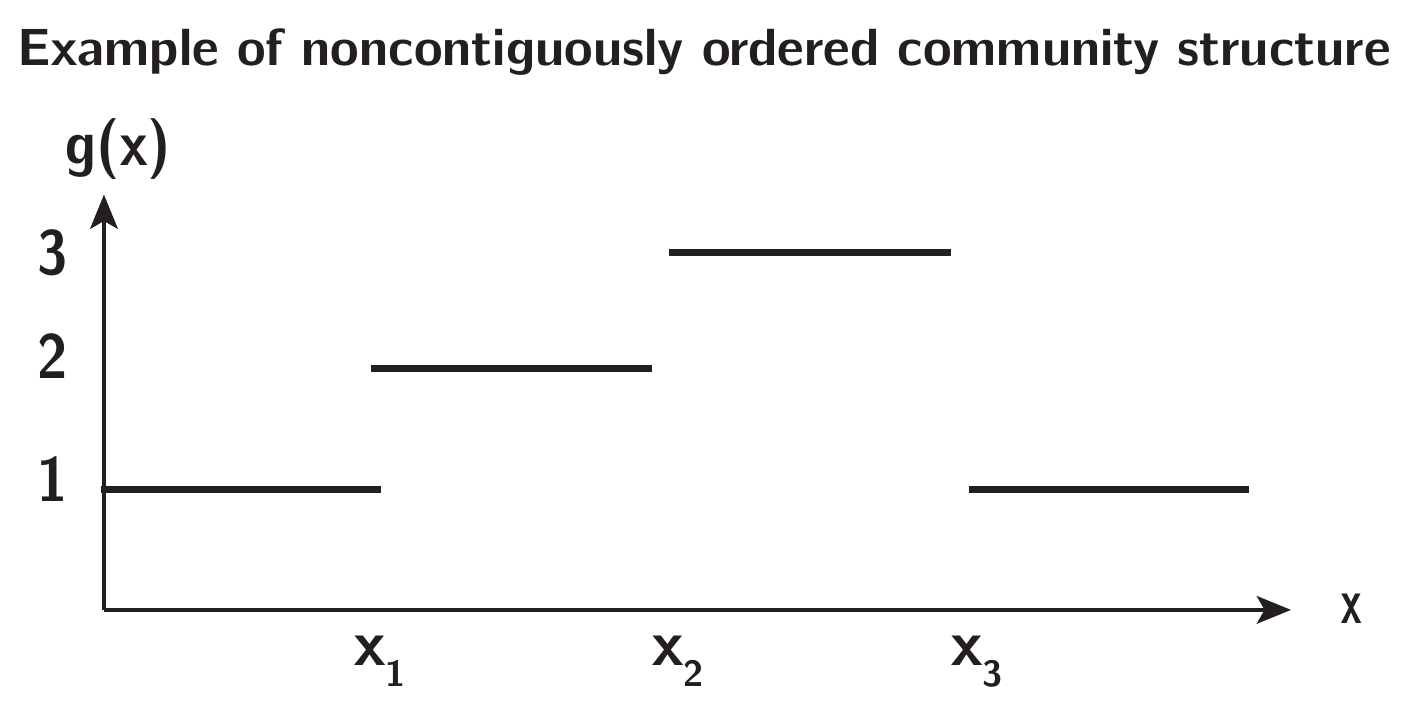}
    \caption{\textbf{The sliced-modularity approach can be extended to encompass noncontiguously ordered communities.} In this example, there exist $s=4$ slices, which are sorted into $c=3$ communities such that $G(1)=G(4)=1$, $G(2)=2$, and $G(3)=3$. Using the sliced-modularity approach for discontinuous communities, we now can optimise the boundaries $x_1$, $x_2$, and $x_3$ between the communities.}
  \label{fig:SI}
  }
\end{figure}

To this end, we restrict the possible group assignment functions $g(x)$ to be piecewise constant on $s$ intervals:
\begin{equation}
    g(x) = 
    \begin{cases}
        G(1) \quad \text{if} \quad x\in[0,x_1)\\
        G(2) \quad \text{if} \quad x\in[0,x_1)\\
        \vdots \\
        G(s) \quad \text{if} \quad x\in[x_{s-1},1]\,,\\
    \end{cases}
\end{equation}
where the, possibly many-to-one, \emph{slice--community} function $G: \{1,2,\dots,s\} \to \{1,2,\dots,c\}$ maps each slice to one of the $c$ communities with $c\leq s$. This allows noncontiguously ordered communities, for example with $s=4$ slices but $c=3$ communities, in which the first slice and the fourth slice are in the same community, such that $G(1)=G(4)\neq G(2) \neq G(3)$ (see Fig.~\ref{fig:SI}). For $c=s$, each slice belongs to its own community and we obtain the case of continuous communities, as discussed in the main text.

If the group assignment function $g(x)$ is of such a form, we can rewrite the graphon modularity function as:
\begin{align}
	Q(g) = \frac{1}{2\mu} &\sum_{i=1}^s \sum_{j=1}^s L^{(2)}(x_{i-1},x_i,x_{j-1},x_j)\delta[ G(i),G(j) ]  \,,\ \text{with} \\
	L^{(2)}(a,b,c,d) =& \int_{a}^{b} \int_{c}^{d} B(x,y) dx dy\,,
\end{align}
where the modularity slice $L^{(2)}(a,b,c,d)$ is now a function of four interval points.

If we fix the slice--community function $G(s')$ \emph{apriori} (i.e.,~we know the order of intervals that belong to the same community), we have to optimise the community borders $x_i$, which is an optimization with $s-1$ degrees of freedom, very similar to the contiguous case in the main manuscript. A more challenging setting occurs when $G(s')$ is unknown (i.e., we have to optimise over all possible slice--community functions $G(s')$, as well as, the community borders $x_i$). For a small number $s$ of slices, we may test all possible partitions but for a large number $s$ of slices this might become unfeasible and heuristics would be required. This is especially a problem, if the appropriate number $s$ of slices is unknown. This points to an algorithm that is initialised by assuming that all slices are associated with distinct communities and assumes that $s$ is large (larger than the true number of communities): the algorithm then finds the boundaries between the slices and might then attempt recursive mergers of the slices to maximize the modularity.
\end{document}